\documentclass[lettersize,onecolumn]{IEEEtran}
\usepackage{amsmath,amsfonts}
\usepackage{algorithmic}
\usepackage{algorithm}
\usepackage{array}
\usepackage[caption=false,font=normalsize,labelfont=sf,textfont=sf]{subfig}
\usepackage{textcomp}
\usepackage{stfloats}
\usepackage{url}
\usepackage{verbatim}
\usepackage{graphicx}
\usepackage{cite}
\usepackage[margin=1in]{geometry}
\usepackage{amsmath,amssymb,amsthm}
\usepackage{mathtools}
\usepackage{hyperref}
\usepackage{cite}
\usepackage{amsmath,amssymb,booktabs,xcolor}
\usepackage{float}

\usepackage{amsmath,amssymb,booktabs,xcolor}
\newcommand{\Ftwo}{\mathbb{F}_2}
\newcommand{\SG}[2]{\mathrm{SmallGroup}(#1,#2)}
\newcommand{\Core}{\mathrm{Core}}

\newtheorem{theorem}{Theorem}[section]

\newtheorem{lemma}[theorem]{Lemma}
\newtheorem{proposition}[theorem]{Proposition}
\newtheorem{definition}[theorem]{Definition}

\newtheorem{example}[theorem]{Example}
\newtheorem{remark}[theorem]{Remark}
\begin{document}

\title{Coset Lifted Product Codes: Constructing Quantum LDPC Codes from Non-Normal Subgroup Actions}

\author{
\IEEEauthorblockN{
Rajendra Prasad Rajpurohit$^{1}$,
V. Lalitha$^{2}$,
Shayan Srinivasa Garani$^{1}$
} \\
\IEEEauthorblockA{
$^{1}$\textit{Department of Electronic Systems Engineering},
\textit{Indian Institute of Science}, Bengaluru, India\\
$^{2}$\textit{Signal Processing and Communications Research Center},
\textit{International Institute of Information Technology Hyderabad},
India\\
rajendrarajpurohit20@gmail.com,
lalitha.v@iiit.ac.in,
shayangs@iisc.ac.in
}
}

% The paper headers

\maketitle

\begin{abstract}
We introduce coset lifted product (CLP) codes, a family of quantum LDPC codes that combine the protograph structure of lifted product codes with the coset action of Aydin, Tamo, and Barg \cite{AydinTamoBarg2026}. Given a finite group $G$, a subgroup $H \le G$ that need not be normal, a protograph $A$ over $\mathbb F_2[G]$, and a protograph $B$ over $\mathbb F_2[N_G(H)]$, the two protographs act on the cosets $G/H$ from the left and from the right to produce quantum CSS codes. Further, for groups of odd order, we give an explicit formula for the dimension as a sum over the irreducible representations $\rho$ of $G$ of hypergraph product dimensions. A search over $14,843$ pairs $(G,H)$ with $H$ restricted to be non-normal and $120 \le n \le 400$ yields certified codes that exceed the best published value of $Q=kd^2/n$ at check weights $6,7$ and $9$, i.e., we obtain $[[360,20,20]]$ at $w=6$, a $[[360,24,20]]$ code at $w=7$, and codes $[[252,16,18]]$ and $[[288,16,20]]$ at $w=9$. Our search also produce high-distance codes with parameters $[[168,10,14]], [[240,8,20]]$ at $w=6$ and $[[216,8,18]]$ at $w=7$.
\end{abstract}

\begin{IEEEkeywords}
Quantum LDPC codes, Lifted product codes, CSS codes, Coset actions, Two-block Group algebra codes
\end{IEEEkeywords}

\section{Introduction}
\label{sec:intro}

The idea that quantum information could be protected from noise was not obvious until Shor showed, in 1995, that a quantum state could be encoded across several physical qubits, and errors detected and reversed without the state being directly measured, which could  lead to the collapsing the encoded state~\cite{Shor1995} otherwise. This opened the direction of the  general theory of quantum error-correcting codes for encoding of $k$ logical qubits into $n>k$ physical qubits, designed to detect and correct errors while leaving the protected information untouched~\cite{NielsenChuang2000}. A particularly convenient family of such codes was introduced independently by Calderbank and Shor, and by Steane, in 1996, known as Calderbank-Shor-Steane (CSS) construction~\cite{CalderbankShor1996,Steane1996}. The CSS construction builds a quantum code using two classical linear codes. If two linear codes having parity-check matrices $H_X$ and $H_Z$ satisfy $H_X H_Z^{\mathsf{T}} = 0$, the resulting $X$-type and $Z$-type Pauli stabilizers commute to produce a quantum error-correcting code for correcting independent bit flips and phase flips. This turned the design of a large class of quantum codes into a purely combinatorial question, namely finding two classical codes whose check matrices are mutually orthogonal.

An important application of this idea followed shortly afterward. Kitaev proposed the toric code, using local stabilizer checks on a qubit lattice on a torus~\cite{Kitaev2003}. Bravyi and Kitaev subsequently adapted the construction to a planar geometry with boundaries~\cite{BravyiKitaev1998}, giving what is now known as the surface code, and Dennis et al.\ established its viability as a quantum memory~\cite{DennisKitaevLandahlPreskill2002}. Later, Fowler et al.~\cite{Fowler2012} laid out in detail why this code family is useful for near-term hardware, showing that its stabilizers are local and two-dimensional, and it tolerates a comparatively high rate of circuit-level noise. What this line of work did not solve was the question of rate. Because the surface code needs $n = \Theta(d^2)$ physical qubits to protect a single logical qubit to distance $d$, its rate $k/n$ vanishes as the code length scaled up, and fault-tolerant architecture built purely from surface codes inherits this overhead.

This shortcoming motivated the search for quantum low-density parity-check (LDPC) codes. The term is borrowed from Gallager, who introduced classical LDPC codes decades earlier as codes whose parity-check matrices are sparse, with row and column weights bounded independently of the code length~\cite{Gallager1962}. A quantum LDPC code is a family of stabilizer codes whose checks are sparse in the same sense, i.e., each check acts on a bounded number of qubits, and each qubit enters a bounded number of checks. For a CSS code, this means that the row and column weights of both $H_X$ and $H_Z$ are bounded. An early systematic study of such codes is due to MacKay, Mitchison, and McFadden~\cite{MacKayMitchisonMcFadden2004}. Interest in the class grew once Gottesman showed that a quantum LDPC family with a constant rate, together with an efficient decoder that suppresses errors below a constant noise threshold, allows fault-tolerant quantum computation with constant space overhead, as opposed to the polylogarithmic overhead of concatenated schemes~\cite{Gottesman2014}. A family with constant rate $k/n$ and linear distance $d = \Theta(n)$ is called \emph{good}. Whether good quantum LDPC codes exist at all remained open for nearly two decades.

Obtaining constant-rate quantum LDPC codes with large distance proved much harder than the classical theory of LDPC and expander codes~\cite{SipserSpielman1996} would suggest. The hypergraph product of Tillich and Z\'emor~\cite{TillichZemor2009,TillichZemor2014} turns any two classical LDPC codes into a quantum LDPC code with constant rate, but its distance is $d = \Theta(\sqrt n)$, and for over a decade no quantum LDPC construction of any rate exceeded $d = O(\sqrt n\,\mathrm{polylog}\, n)$.

Later, Panteleev and Kalachev~\cite{PanteleevKalachev2021Degenerate,PanteleevKalachev2022LinearDistance} introduced the \emph{lifted product}, a hypergraph product taken over a commutative subalgebra $R \subseteq M_\ell(\mathbb F_2)$ instead of over $\mathbb F_2$, with the algebra of $\ell\times\ell$ circulant matrices, i.e.\ $R = \mathbb F_2[\mathbb Z_\ell]$, as the case used in practice. Applied to a random cyclic cover of an expander graph, this gave quantum LDPC codes with dimension $\Theta(\log n)$ and \emph{almost linear} distance $\Theta(n/\log n)$. The balanced product of Breuckmann and Eberhardt~\cite{BreuckmannEberhardt2021Balanced}, introduced at the same time, obtains a quantum code by taking the quotient of a hypergraph product by a common group action on both factors, when the group is abelian and acts freely, it coincides with a lifted product. Neither family has a constant rate.

Constant rate together with linear distance took a different route. Panteleev and Kalachev~\cite{PanteleevKalachev2022Good} took the lifted product over the group algebra of a \emph{non-abelian} group and proved that, under expansion conditions on the local codes, the resulting quantum LDPC codes are asymptotically good. This settled the qLDPC conjecture. Quantum Tanner codes~\cite{LeverrierZemor2022Tanner} reach the same parameters by a closely related construction. For a survey of the field see~\cite{BreuckmannEberhardt2021Review}.

At the block lengths relevant to current hardware i.e., over short code lengths, attention has centered on explicit lifted product codes with $1\times 1$ protographs over the regular representation of a finite group, known as two-block codes. Generalized bicycle codes~\cite{KovalevPryadko2013} use a cyclic group. Two-block group algebra (2BGA) codes~\cite{LinPryadko2024} allow any finite group $G$, for $a, b \in \mathbb F_2[G]$ acting through the left and right regular representations $L$ and $R$, the code has $H_X = (L(a)\,|\,R(b))$ and $H_Z = (R(b)^{\mathsf T}\,|\,L(a)^{\mathsf T})$. Bivariate bicycle (BB) codes~\cite{Bravyi2024} are the case $G = \mathbb Z_\ell \times \mathbb Z_{\ell'}$ with $a$ and $b$ each a sum of three monomials. Bravyi et al.\ report a circuit-level threshold close to that of the surface code at an order of magnitude fewer physical qubits. In each of these constructions, the group acts on itself, so the block algebra is the full regular representation.

Two further directions are relevant here. One grows a base code into a family by graph covers~\cite{SymonsRajputBrowne2025,Guemard2024} or by group extensions~\cite{HirasakiLee2026}. In the latter the kernel of the extension is normal, so the lifted code is again a lifted product over the regular representation of a group. The other direction is by changing the group actions over the bi-partitions of the $H_X$ and $H_Z$ matrices as done by Aydin, Tamo, and Barg~\cite{AydinTamoBarg2026}. They let $\mathbb F_2[G]$ act on the coset space $G/H$ of a subgroup $H$ that need not be normal, allowing for a much expanded space of possible codes, and they take the second block from a right action of the normalizer $N_G(H)$ on the same cosets. The two actions commute, which gives CSS orthogonality, and the resulting two-block codes are 2BGA codes when $H$ is normal and lie outside that family otherwise. Their search over pairs $(G,H)$ yields several new codes, including codes of check weight $6,8$ and $10$. The same non-normal actions appear in the balanced product setting~\cite{BreuckmannEberhardt2021Balanced}, where Qian and Li~\cite{qian2026multi} recently ran an automated search over coset-orbit balanced-product codes with protograph entries given by sums of double cosets, retaining a candidate only after verifying $H_XH_Z^{\mathsf T}=0$ numerically.

In this paper, we combine the two ingredients i.e., based on the protographs and the group actions. Given a finite group $G$, a subgroup $H \le G$ with normalizer $N = N_G(H)$, a protograph $A$ over $\mathbb F_2[G]$, and a protograph $B$ over $\mathbb F_2[N]$, we let $A$ act on the coset space $G/H$ from the left and $B$ from the right, and form the lifted product of the resulting block matrices. We call the resulting codes the coset lifted product codes and denote them as $\mathrm{CLP}(A,B;G,H)$. We show that the code satisfies the CSS condition for every $G$ and subgroup $H$, whether normal or not. For $|G|$ odd, we derive formula for dimension of these codes. Further a search over $14,843$ pairs $(G,H)$ with $120 \le n \le 400$ yields certified codes exceeding the best published value of $Q = kd^2/n$ at $w = 6$, a $[[360,20,20]]$ code with $Q = 22.2$, at $w=7$, a $[[360,24,20]]$ code with $Q=26.67$ , and at $w = 9$, a $[[288,16,20]]$ code with $Q = 22.22$, together with candidates at $w \in \{7,9,10\}$ whose distance upper bounds give $Q$ up to $40.5$. All codes are printed in full and are reproducible from the data given.

The remainder of the paper is organized as follows. In Section~\ref{sec:prelims} we discuss the basics of quantum CSS codes, group algebras and constructing quantum codes from lifted product codes based on two-block group algebra codes. In  Section~\ref{sec:construction} we construct coset lifted product CSS codes and obtain the dimension of the codes based on their irreducible representations. In Section~\ref{sec:search} we report the search of such codes along with concrete examples showing the code design procedure to motivate proposed code constructions.

\section{Preliminaries}
\label{sec:prelims}
Throughout, $\ker M$ and $\operatorname{row}(M)$ denote the kernel and the row space of matrix $M$ over $\mathbb F_2$. For a field $\mathbb F$, $M_{p \times q}(\mathbb F)$ denotes the set of $p \times q$ matrices over $\mathbb F$, and $M_p(\mathbb F)$ the square ones.  

\subsection{Classical and quantum codes}
\label{sec:prelim-codes}

 A binary linear code of length $n$ is a subspace $C \subseteq \mathbb F_2^n$. A matrix $H \in \mathbb F_2^{r \times n}$ with $C = \ker H$ is called a \emph{parity-check matrix} of the code $C$. The parameters of $C$ are written $[n,k,d]$, where $k = \dim C = n - \operatorname{rank} H$, and the minimum distance $d = \min\{\operatorname{wt}(v) \mid v \in C,\ v \ne 0\}$, where $\operatorname{wt}(v)$ is the Hamming weight of $v\in C$.    

A family of codes is called a low-density parity-check (LDPC) codes if the row and column weights of its parity-check matrices are bounded by a constant independent of $n$, so that every check involves a bounded number of bits and every bit takes part in a bounded number of checks~\cite{Gallager1962,MacWilliamsSloane1977}.    
For a vector $v \in \mathbb F_2^n$ let $X(v) = \bigotimes_{i} X^{v_i}$ and $Z(v) = \bigotimes_{i} Z^{v_i}$ denote the corresponding $n$-qubit Pauli operators. Two such operators satisfy $X(u)Z(v) = (-1)^{\langle u, v\rangle} Z(v)X(u)$, so $X(u)$ and $Z(v)$ commute exactly when $\langle u, v\rangle = 0$ over $\mathbb F_2$~\cite{Gottesman1997,NielsenChuang2000}.  

 \begin{definition}[CSS code] \label{def:css} Let $H_X \in \mathbb F_2^{r_X \times n}$ and $H_Z \in \mathbb F_2^{r_Z \times n}$ satisfy $H_X H_Z^{\mathsf T} = 0$. The CSS code $\mathrm{CSS}(H_X, H_Z)$~\cite{CalderbankShor1996,Steane1996} is the stabilizer code on $n$ qubits whose stabilizer group is generated by the operators $X(h)$ for the rows $h$ of $H_X$ and $Z(h)$ for the rows $h$ of $H_Z$. The condition $H_X H_Z^{\mathsf T} = 0$ ensures that all generators commute. The code has parameters $[[n,k,d]]$ with \begin{equation}     k = n - \operatorname{rank} H_X - \operatorname{rank} H_Z,     \qquad     d = \min(d_X, d_Z), \end{equation} where \begin{equation}     d_X = \min\big\{\, \operatorname{wt}(v) \ \big|\ v \in \ker H_Z \setminus \operatorname{row}(H_X) \,\big\},     \qquad     d_Z = \min\big\{\, \operatorname{wt}(v) \ \big|\ v \in \ker H_X \setminus \operatorname{row}(H_Z) \,\big\}. \end{equation} The vectors in $\ker H_Z$ correspond to $X$-type operators that commute with every $Z$ stabilizer, those in $\operatorname{row}(H_X)$ are the $X$ stabilizers themselves, and the difference set consists of the nontrivial logical $X$ operators. The roles are exchanged for $Z$. In the language of classical codes, $\mathrm{CSS}(H_X, H_Z)$ is the code $\mathrm{CSS}(C_1, C_2)$ with $C_1 = \ker H_Z$ and $C_2 = \operatorname{row}(H_X) \subseteq C_1$, and $k = \dim C_1 - \dim C_2$. The code is quantum LDPC if $H_X$ and $H_Z$ are LDPC. \end{definition}

\subsection{Group actions and group algebras}
\label{sec:prelim-groups}
In this subsection, we recall definitions of cosets, group actions on them, and the group algebra notation used in the construction.

\begin{definition}[Cosets, normalizer, core]
\label{def:subgroup}
Let $G$ be a finite group and $H \le G$ a subgroup. The set of left cosets is $\Omega = G/H = \{\, gH \mid g \in G \,\}$, and we write $m = |\Omega| = |G|/|H|$. The subgroup $H$ is \emph{normal} in $G$, written $H \trianglelefteq G$, if $gHg^{-1} = H$ for every $g \in G$. The
\emph{normalizer} of $H$ in $G$ is
\begin{equation}
    N_G(H) = \{\, g \in G \mid gHg^{-1} = H \,\},
\end{equation}
the largest subgroup of $G$ in which $H$ is normal, and we abbreviate it to $N$ when $G$ and $H$ are fixed. The \emph{core} of $H$ in $G$ is
\begin{equation}
    \operatorname{Core}_G(H) = \bigcap_{g \in G} gHg^{-1},
\end{equation}
the largest normal subgroup of $G$ contained in $H$. Thus $H \trianglelefteq G$ holds if and only if $N = G$, and if and only if $\operatorname{Core}_G(H) = H$~\cite{DummitFoote2004}.
\end{definition}

\begin{definition}[Permutation matrices]
\label{def:permrep}
Let $\Omega = \{\omega_1, \dots, \omega_m\}$ be an ordered set. To a permutation $\sigma$ of $\Omega$ we associate the $m \times m$ binary matrix with entries
\begin{equation}\label{eq:matconv}
    \sigma_{ij} = 1 \ \text{ if } \sigma(\omega_j) = \omega_i,
    \qquad \sigma_{ij} = 0 \ \text{ otherwise},
\end{equation}
and use the same symbol $\sigma$ for both. Thus $\sigma e_j = e_i$ when $\sigma$ sends $\omega_j$ to $\omega_i$, composing permutations corresponds to multiplying their matrices, and $\sigma^{\mathsf T} = \sigma^{-1}$ for every permutation matrix.
\end{definition}

\begin{definition}[Left and right coset actions]
\label{def:coset-action}
For $g \in G$ and $n \in N$ define permutations $L(g)$ and $R(n)$ of $\Omega = G/H$ by
\begin{equation}\label{eq:LR}
    L(g)(xH) = gxH, \qquad R(n)(xH) = xnH, \qquad x \in G .
\end{equation}
% We use the same symbols for the permutation matrices of Definition~\ref{def:permrep}.
\end{definition}

\begin{proposition}[\cite{AydinTamoBarg2026}, Props.~II.1 and II.3]
\label{prop:LR}
\begin{enumerate}
    \item $R(n)$ is well defined, since $n^{-1}Hn = H$ gives $xhnH = xn(n^{-1}hn)H = xnH$ for all $h \in H$.
    \item $L(gg') = L(g)L(g')$ and $R(nn') = R(n')R(n)$ for all     $g, g' \in G$ and $n, n' \in N$.
    \item $\ker L = \operatorname{Core}_G(H)$ and $\ker R = H$. Hence     $L$ takes $|G|/|\operatorname{Core}_G(H)|$ distinct values and $R$ takes $|N|/|H|$ distinct values.
\end{enumerate}
\end{proposition}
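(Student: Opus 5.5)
The plan is to verify the three items directly from Definition~\ref{def:coset-action}, working with cosets as sets. For item 1, well-definedness of $R(n)$ means that $xnH$ depends only on $xH$. If $x'H = xH$, then $x' = xh$ for some $h\in H$. Hence $x'nH = xhnH = xn(n^{-1}hn)H$. Since $n\in N$, we have $n^{-1}Hn = H$ (because $N$ is a subgroup, $n^{-1}\in N$ as well). So $n^{-1}hn\in H$ and $x'nH = xnH$. The permutation $L(g)$ needs no such check, since $gxhH = gxH$ trivially. Each map is a bijection of $\Omega$, with inverse $L(g^{-1})$, respectively $R(n^{-1})$.

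For item 2, I compute on a coset. $L(g)L(g')(xH) = L(g)(g'xH) = gg'xH = L(gg')(xH)$. For $R$, $R(n')R(n)(xH) = R(n')(xnH) = xnn'H = R(nn')(xH)$, which gives the order reversal. By Definition~\ref{def:permrep}, composition of permutations corresponds to the matrix product, so both identities hold for the matrices as well. Thus $L$ is a homomorphism and $R$ is an anti-homomorphism; equivalently, $n\mapsto R(n^{-1})$ is a homomorphism. This guarantees that kernels and images behave as for homomorphisms.

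For item 3, $g\in\ker L$ iff $gxH = xH$ for all $x$, i.e.\ $x^{-1}gx\in H$ for all $x$, i.e.\ $g\in\bigcap_x xHx^{-1} = \operatorname{Core}_G(H)$. For $R$, $n\in\ker R$ iff $xnH = xH$ for all $x\in G$. Taking $x=1$ gives $n\in H$. Conversely, if $n\in H$ then $xnH = xH$. So $\ker R = H$, which is a subgroup of $N$ and normal in it. The counting statements then follow from the first isomorphism theorem: $L(G)\cong G/\operatorname{Core}_G(H)$ has $|G|/|\operatorname{Core}_G(H)|$ elements. For $R$, I apply the theorem to the homomorphism $n\mapsto R(n^{-1})$, whose kernel is also $H$, so the image has $|N|/|H|$ elements.

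None of these steps is hard. The only point needing care is item 2 for $R$ together with its use in item 3: the order reversal must be tracked correctly, and the isomorphism theorem should be applied to the genuine homomorphism $n\mapsto R(n^{-1})$ rather than to $R$ itself.
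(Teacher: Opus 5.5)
Your proposal is correct and follows essentially the same route as the paper: direct evaluation on cosets for items 1 and 2, and for item 3 the characterization $x^{-1}gx\in H$ for all $x$ (giving $\bigcap_x xHx^{-1}$) together with the $x=e$ specialization for $R$. You also spell out the counting claims via the first isomorphism theorem applied to the genuine homomorphism $n\mapsto R(n^{-1})$, which has the same image as $R$ and which the paper leaves implicit; you handle this correctly.
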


\begin{proof}
Statement 2 follows from~\eqref{eq:LR}, since $L(gg')(xH) = gg'xH = L(g)L(g')(xH)$ and $R(nn')(xH) = xnn'H = R(n')R(n)(xH)$. For 3, $L(g) = \mathrm{id}$ (identity permutation) if and only if $x^{-1}gx \in H$ for all $x \in G$, that is $g \in \bigcap_x xHx^{-1}$, and $R(n) = \mathrm{id}$ if and only if
$xnH = xH$ for all $x$, which for $x = e$ gives $n \in H$, and conversely $n \in H$ gives $xnH = xH$.
\end{proof}

\begin{definition}[Group algebra]
\label{def:groupalg}
For a finite group $G$, the group algebra $\mathbb F_2[G]$ is the $\mathbb F_2$-vector space with basis $G$, whose elements are the formal
sums $\alpha = \sum_{x \in G} a_x\, x$ with $a_x \in \mathbb F_2$, and whose multiplication is
\begin{equation}
    \Big(\sum_{x \in G} a_x\, x\Big)\Big(\sum_{y \in G} b_y\, y\Big)
    = \sum_{x, y \in G} a_x b_y\, xy .
\end{equation}
\end{definition}
We extend the action $L$ and $R$ of Definition \ref{def:coset-action} to the group algebras by linearity, $L(\alpha) = \sum_{x} a_x L(x)$ for $\alpha \in \mathbb F_2[G]$ and
$R(\beta) = \sum_{x} b_x R(x)$ for $\beta \in \mathbb F_2[N]$, and apply them entrywise to matrices with entries in these algebras.

\subsection{Group algebra quantum LDPC codes}
\label{sec:prelim-lp}

This subsection recalls the two-block codes and the lifted product.

\begin{definition}[Two-block group algebra codes]
\label{def:2bga}
Let $G$ be a finite group and let $L_G$ and $R_G$ be the left and right regular representations of $G$, 
% that is, the permutations of $G$ given
% by $L_G(g)(x) = gx$ and $R_G(g)(x) = xg$, which are the coset actions of
% Definition~\ref{def:coset-action} for $H = \{e\}$, so that
% $N = G$ and $L_G(g)$ commutes with $R_G(g')$ for all $g, g' \in G$. 
For $a, b \in \mathbb F_2[G]$ the two-block group algebra (2BGA) code~\cite{LinPryadko2024} is the CSS code with
\begin{equation}
    H_X = \big(L_G(a) \,\big|\, R_G(b)\big), \qquad
    H_Z = \big(R_G(b)^{\mathsf T} \,\big|\, L_G(a)^{\mathsf T}\big)
\end{equation}
on $n = 2|G|$ qubits. Generalized bicycle codes~\cite{KovalevPryadko2013} are the case of a cyclic group, and bivariate bicycle codes~\cite{Bravyi2024} the case $G = \mathbb Z_\ell \times \mathbb Z_{\ell'}$ with $a$ and $b$ each a sum of three group elements.
\end{definition}

\begin{definition}[Coset-based two-block codes~\cite{AydinTamoBarg2026}]
\label{def:atb}
Let $H \le G$, and let $L$ and $R$ be the coset actions of Definition~\ref{def:coset-action}. For $a \in \mathbb F_2[G]$ and $b \in \mathbb F_2[N]$ the coset code of~\cite{AydinTamoBarg2026} is the
CSS code with
\begin{equation}
    H_X = \big(L(a) \,\big|\, R(b)\big), \qquad
    H_Z = \big(R(b)^{\mathsf T} \,\big|\, L(a)^{\mathsf T}\big)
\end{equation}
on $n = 2| G/H|$ qubits. 
\end{definition}
Note that for $H = \{e\}$, the above construction is the 2BGA code of $G$, and for $H \trianglelefteq G$, it is the 2BGA code of the quotient group $G/H$.

\begin{definition}[Lifted product codes]
\label{def:lp}
Let $\mathcal R_1, \mathcal R_2 \subseteq M_\ell(\mathbb F_2)$ be two subalgebras that commute elementwise, meaning $XY = YX$ for all $X \in \mathcal R_1$ and $Y \in \mathcal R_2$. Let $A$ be an $n_{A0} \times n_{A1}$ matrix with entries in $\mathcal R_1$ and $B$ an $n_{B0} \times n_{B1}$ matrix with entries in $\mathcal R_2$, viewed as
binary block matrices with $\ell \times \ell$ blocks. The lifted product code $\mathrm{LP}(A,B)$
of~\cite{PanteleevKalachev2021Degenerate,PanteleevKalachev2022LinearDistance} is the CSS code with
\begin{align}
    H_X &= \big(A \otimes I_{n_{B0}} \;\big|\; I_{n_{A0}} \otimes B\big),
    \label{eq:lp-hx}\\
    H_Z &= \big(I_{n_{A1}} \otimes B^{\mathsf T} \;\big|\;
              A^{\mathsf T} \otimes I_{n_{B1}}\big),
    \label{eq:lp-hz}
\end{align}
on $n = \ell\,(n_{A1}n_{B0} + n_{A0}n_{B1})$ qubits, where $\otimes$ is the Kronecker product taken with the $\ell \times \ell$ blocks as units, as in~\cite[Sec.~III-D]{PanteleevKalachev2022LinearDistance}, and $A^{\mathsf T}$, $B^{\mathsf T}$ are the transposes of the binary block matrices, which coincide with the conjugate transposes used
there.
\end{definition}

The case $\mathcal R_1 = \mathcal R_2 = \mathbb F_2$, with $\ell = 1$, is the hypergraph product~\cite{TillichZemor2014}, and $\mathcal R_1 = \mathcal R_2 = R$ for a commutative subalgebra $R$, with
the circulant algebra $\mathbb F_2[\mathbb Z_\ell]$ as the main example, is the setting of~\cite{PanteleevKalachev2021Degenerate,PanteleevKalachev2022LinearDistance}, where the non-abelian case is noted as possible and left open. That case is taken up in~\cite{PanteleevKalachev2022Good} with $\mathcal R_1 = L_G(\mathbb F_2[G])$ and $\mathcal R_2 = R_G(\mathbb F_2[G])$ for a non-abelian group $G$, and its $1 \times 1$ case is
Definition~\ref{def:2bga}.

\section{Construction of Coset Lifted Product Codes}
\label{sec:construction}

Throughout this section $G$ is a finite group, $H \le G$ is a subgroup
that need not be normal, $\Omega = G/H$, $m = |G|/|H|$, $N = N_G(H)$,
$L$ and $R$ are the coset actions of Section~\ref{sec:prelim-groups},
and $[n] = \{1, \dots, n\}$.

\subsection{Coset lifted product codes}

In this subsection, we define the codes, prove that they are CSS codes, and identify the earlier constructions among them.

\begin{definition}[Coset lifted product code]
\label{def:clp}
Let $A$ be an $n_{A0} \times n_{A1}$ matrix with entries $\alpha_{ij} \in \mathbb F_2[G]$ and $B$ an $n_{B0} \times n_{B1}$ matrix with entries $\beta_{kl} \in \mathbb F_2[N]$. Write $L(A)$ and $R(B)$ for the block matrices with $m \times m$ blocks $L(\alpha_{ij})$ and $R(\beta_{kl})$. The coset-lifted product code $\mathrm{CLP}(A,B;G,H)$
is the CSS code with
\begin{align}
    H_X &= \big(L(A) \otimes I_{n_{B0}} \;\big|\; I_{n_{A0}} \otimes R(B)\big),
    \label{eq:HX}\\
    H_Z &= \big(I_{n_{A1}} \otimes R(B)^{\mathsf T} \;\big|\;
              L(A)^{\mathsf T} \otimes I_{n_{B1}}\big)
        % = \big(I_{n_{A1}} \otimes R(B^{\dagger}) \;\big|\;
        %       L(A^{\dagger}) \otimes I_{n_{B1}}\big),
    \label{eq:HZ}
\end{align}
where $\otimes$ acts on $L(A)$ and $R(B)$ as block matrices. The code has $n = m\,(n_{A1}n_{B0} + n_{A0}n_{B1})$ qubits, $m\,n_{A0}n_{B0}$ checks of $X$ type, and $m\,n_{A1}n_{B1}$ checks of $Z$ type. 
\end{definition}

Before showing that Construction \ref{def:clp} satisfy CSS condition, we need the following lemma \cite{AydinTamoBarg2026}.

\begin{lemma}\label{lem:commute}
For every $g \in G$ and $n \in N$,
\begin{equation}
    L(g)\,R(n) = R(n)\,L(g).
\end{equation}
Consequently $L(\alpha)R(\beta) = R(\beta)L(\alpha)$ for all
$\alpha \in \mathbb F_2[G]$ and $\beta \in \mathbb F_2[N]$.
\end{lemma}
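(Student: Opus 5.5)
The plan is to verify the identity on group elements by evaluating both permutations on an arbitrary coset, and then extend to group algebra elements by bilinearity. By Definition~\ref{def:permrep}, composition of permutations of $\Omega$ corresponds to multiplication of their matrices. Hence it suffices to show that the two composite permutations $L(g)\circ R(n)$ and $R(n)\circ L(g)$ agree on every $xH \in \Omega$.

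Fix $g \in G$, $n \in N$ and $x \in G$. By the definitions in \eqref{eq:LR}, we have $R(n)(xH) = xnH$, and applying $L(g)$ gives $gxnH$. In the other order, $L(g)(xH) = gxH$, and applying $R(n)$, with the representative $gx$ in place of $x$, gives $(gx)nH = gxnH$. Both results equal $gxnH$, so the permutations coincide, and therefore so do their matrices.

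The only point needing care is that $R(n)$ is evaluated on a coset through a chosen representative. The second computation applies $R(n)$ to $gxH$ using the representative $gx$ rather than an arbitrary one. Proposition~\ref{prop:LR}(1) guarantees that $R(n)$ is well defined for $n \in N$, meaning the result does not depend on the representative, so this step is legitimate. This is precisely where the hypothesis $n \in N_G(H)$ is used. I expect it to be the only nontrivial point, since the rest is associativity in $G$.

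For the algebra statement, write $\alpha = \sum_x a_x x$ and $\beta = \sum_y b_y y$ with $y \in N$. By the linear extension following Definition~\ref{def:groupalg}, $L(\alpha)R(\beta) = \sum_{x,y} a_x b_y L(x)R(y)$, because matrix multiplication is bilinear over $\mathbb F_2$. Applying the group-element case termwise turns this into $\sum_{x,y} a_x b_y R(y)L(x)$, which equals $R(\beta)L(\alpha)$.
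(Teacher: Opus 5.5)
Your proposal is correct and follows the paper's proof: evaluate $L(g)\circ R(n)$ and $R(n)\circ L(g)$ on an arbitrary coset $xH$ to obtain $gxnH$ both times, then extend bilinearly over $\mathbb F_2$. Your remark that the second evaluation depends on $R(n)$ being well defined for $n \in N$, which is Proposition~\ref{prop:LR}(1), is a point the paper leaves implicit.
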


\begin{proof}
Let $xH \in \Omega$. By Definition~\ref{def:coset-action},
\begin{align}
    \big(L(g)\circ R(n)\big)(xH) &= L(g)\big(xnH\big) = gxnH, \\
    \big(R(n)\circ L(g)\big)(xH) &= R(n)\big(gxH\big) = gxnH .
\end{align}
The two permutations agree on every element of $\Omega$, so they are equal, and by Definition~\ref{def:permrep} so are their matrices. For the second claim write $\alpha = \sum_{g} a_g\, g$ and $\beta = \sum_{n} b_n\, n$ with $a_g, b_n \in \mathbb F_2$. Then
\begin{equation}
    L(\alpha)R(\beta)
    = \sum_{g}\sum_{n} a_g b_n\, L(g)R(n)
    = \sum_{g}\sum_{n} a_g b_n\, R(n)L(g)
    = R(\beta)L(\alpha).
\end{equation}
\end{proof}

\begin{theorem}[CSS orthogonality]
\label{thm:orthogonality}
For a finite group $G$, and subgroup $H \le G$, and every $A$ and $B$ as in Definition~\ref{def:clp}, $H_X H_Z^{\mathsf T} = 0$.
\end{theorem}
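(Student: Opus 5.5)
The plan is to compute $H_X H_Z^{\mathsf T}$ blockwise and reduce it to the commutation relation of Lemma~\ref{lem:commute}. First, transpose \eqref{eq:HZ}: since $(M\otimes I)^{\mathsf T} = M^{\mathsf T}\otimes I$ for the block Kronecker product, we get
\begin{equation*}
H_Z^{\mathsf T} = \begin{pmatrix} I_{n_{A1}} \otimes R(B) \\ L(A) \otimes I_{n_{B1}} \end{pmatrix}.
\end{equation*}
Here the transpose of the binary matrix $L(A)^{\mathsf T}$ is again $L(A)$, and likewise for $R(B)$. Multiplying by \eqref{eq:HX} then gives
\begin{equation*}
H_X H_Z^{\mathsf T} = \big(L(A)\otimes I_{n_{B0}}\big)\big(I_{n_{A1}}\otimes R(B)\big) + \big(I_{n_{A0}}\otimes R(B)\big)\big(L(A)\otimes I_{n_{B1}}\big).
\end{equation*}
Over $\mathbb F_2$ it therefore suffices to show that the two products are equal.

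Next, I would make the block Kronecker product explicit. Index the rows of $L(A)\otimes I_{n_{B0}}$ by pairs $(i,k)$ and the columns by $(j,k')$. The $m\times m$ block in position $((i,k),(j,k'))$ is $L(\alpha_{ij})\,\delta_{kk'}$. Similarly, the block of $I_{n_{A1}}\otimes R(B)$ in position $((j,k),(j',l))$ is $\delta_{jj'}R(\beta_{kl})$. Multiplying block matrices, the $((i,k),(j,l))$ block of the first product is $L(\alpha_{ij})R(\beta_{kl})$. The same bookkeeping gives $R(\beta_{kl})L(\alpha_{ij})$ for the $((i,k),(j,l))$ block of the second product. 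The main care point is this index convention: I must check that ``$\otimes$ with blocks as units'' means the $m\times m$ blocks are never Kronecker-multiplied internally, so both products are $n_{A0}n_{B0}\times n_{A1}n_{B1}$ arrays of $m\times m$ blocks indexed the same way.

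Finally, Lemma~\ref{lem:commute} gives $L(\alpha_{ij})R(\beta_{kl}) = R(\beta_{kl})L(\alpha_{ij})$, since $\alpha_{ij}\in\mathbb F_2[G]$ and $\beta_{kl}\in\mathbb F_2[N]$. Hence every block of the two summands coincides, their sum vanishes mod $2$, and $H_X H_Z^{\mathsf T}=0$. Normality of $H$ is never used; only $\beta_{kl}\in\mathbb F_2[N]$ is needed so that $R$ is defined, by Proposition~\ref{prop:LR}.
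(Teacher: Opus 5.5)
Your proof is correct and follows essentially the same route as the paper. Both compute the $((i,k),(j,l))$ block of $H_X H_Z^{\mathsf T}$ as $L(\alpha_{ij})R(\beta_{kl}) + R(\beta_{kl})L(\alpha_{ij})$ and conclude via Lemma~\ref{lem:commute}. The only cosmetic difference is that you transpose $H_Z$ as a whole before multiplying, whereas the paper transposes the individual blocks of $H_Z$ inside the blockwise sum.
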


\begin{proof}
Index the block rows of $H_X$ by $(i,k) \in [n_{A0}] \times [n_{B0}]$ and the block rows of $H_Z$ by $(j,l) \in [n_{A1}] \times [n_{B1}]$. The block columns of both matrices fall into two groups, indexed by
$(j',k') \in [n_{A1}] \times [n_{B0}]$ and by $(i',l') \in [n_{A0}] \times [n_{B1}]$. By~\eqref{eq:HX} and
\eqref{eq:HZ}, the blocks are
\begin{align}
    (H_X)_{(i,k),(j',k')} &= \delta_{kk'}\, L(\alpha_{ij'}), &
    (H_X)_{(i,k),(i',l')} &= \delta_{ii'}\, R(\beta_{kl'}), \label{eq:HXblocks}\\
    (H_Z)_{(j,l),(j',k')} &= \delta_{jj'}\, R(\beta_{k'l})^{\mathsf T}, &
    (H_Z)_{(j,l),(i',l')} &= \delta_{ll'}\, L(\alpha_{i'j})^{\mathsf T}.
    \label{eq:HZblocks}
\end{align}
Since block transposition is the ordinary transposition of the binary matrix, $(H_Z^{\mathsf T})_{(j',k'),(j,l)} = \big((H_Z)_{(j,l),(j',k')}\big)^{\mathsf T}$ and likewise for the second column group. Multiplying blockwise and
summing over both column groups,
\begin{align}
    \big(H_X H_Z^{\mathsf T}\big)_{(i,k),(j,l)}
    &= \sum_{j',k'} \delta_{kk'}\, L(\alpha_{ij'})\,
       \big(\delta_{jj'}\, R(\beta_{k'l})^{\mathsf T}\big)^{\mathsf T}
     + \sum_{i',l'} \delta_{ii'}\, R(\beta_{kl'})\,
       \big(\delta_{ll'}\, L(\alpha_{i'j})^{\mathsf T}\big)^{\mathsf T} \nonumber\\
    &= L(\alpha_{ij})\, R(\beta_{kl}) + R(\beta_{kl})\, L(\alpha_{ij}) .
\end{align}
The two terms are equal by Lemma~\ref{lem:commute}, so their sum is zero over $\mathbb F_2$. Every block vanishes, hence
$H_X H_Z^{\mathsf T} = 0$.
\end{proof}

Definition~\ref{def:clp} and Theorem~\ref{thm:orthogonality} hold over any finite field $\mathbb F_q$ once the first block of $H_Z$ is negated, since the proof uses only Lemma~\ref{lem:commute}. The resulting codes
are qudit codes. All codes in this paper are binary.

\begin{proposition}[Special cases]
\label{prop:special-cases}
\begin{enumerate}
    \item If $H = \{e\}$, then $\Omega = G$, $N = G$, and $L$, $R$ are the
 left and right regular representations $L_G$, $R_G$. Hence $\mathrm{CLP}(A,B;G,\{e\})$ is the lifted product $\mathrm{LP}(A,B)$     of Definition~\ref{def:lp} with $\mathcal R_1 = L_G(\mathbb F_2[G])$ and $\mathcal R_2 = R_G(\mathbb F_2[G])$.
    \item If $n_{A0} = n_{A1} = n_{B0} = n_{B1} = 1$, then    $\mathrm{CLP}(A,B;G,H)$ is the coset-based two-block code of    Definition~\ref{def:atb}.
    \item If $H \trianglelefteq G$, let $\bar G = G/H$ and let $\bar A$, $\bar B$ be the images of $A$, $B$ under the map $\mathbb F_2[G] \to \mathbb F_2[\bar G]$ induced by $g \mapsto gH$. Then $\mathrm{CLP}(A,B;G,H) = \mathrm{CLP}(\bar A, \bar B; \bar G, \{e\})$,    a lifted product over $\bar G$ by statement 1.

        \item Let $K = \operatorname{Core}_G(H)$, and let $\bar A$, $\bar B$    be the images of $A$, $B$ under $\mathbb F_2[G] \to \mathbb F_2[G/K]$.    Then $\mathrm{CLP}(A,B;G,H) = \mathrm{CLP}(\bar A, \bar B; G/K, H/K)$, and $H/K$ has trivial core in $G/K$. Statement 3 is the case $K = H$.
\end{enumerate}

\end{proposition}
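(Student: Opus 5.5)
The plan is to prove statement~4 in full and obtain statements~1--3 as cheap corollaries, since all four reduce to identifying the permutation matrices $L(g)$ and $R(n)$ under a suitable bijection of coset spaces. Throughout, equality of CSS codes is understood up to the fixed ordering of the coset set used in Definition~\ref{def:permrep}. Once $L(\alpha_{ij})$ and $R(\beta_{kl})$ agree block by block, \eqref{eq:HX} and \eqref{eq:HZ} coincide entry by entry, because these formulas only assemble the blocks with Kronecker products and transposes.

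For statement~1, take $H=\{e\}$. Then $gH=\{g\}$, so $\Omega$ is identified with $G$ and $N_G(\{e\})=G$. By \eqref{eq:LR}, $L(g)x=gx$ and $R(n)x=xn$, which are exactly $L_G$ and $R_G$. By Lemma~\ref{lem:commute} the algebras $L_G(\mathbb F_2[G])$ and $R_G(\mathbb F_2[G])$ commute elementwise, so Definition~\ref{def:lp} applies with $\ell=|G|$. Comparing \eqref{eq:HX}--\eqref{eq:HZ} with \eqref{eq:lp-hx}--\eqref{eq:lp-hz} gives the claim. Statement~2 is immediate: with $1\times 1$ protographs all Kronecker factors are $I_1$, and \eqref{eq:HX}--\eqref{eq:HZ} reduce literally to Definition~\ref{def:atb}.

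For statement~4, set $K=\operatorname{Core}_G(H)$, which is normal in $G$ and contained in $H$, and let $\pi:G\to\bar G=G/K$. The map $\phi:G/H\to \bar G/\bar H$ with $\bar H=H/K$, given by $xH\mapsto \pi(x)\bar H$, is a well-defined bijection. This follows from the correspondence theorem, since $\pi^{-1}(\bar H)=H$ because $K\le H$.

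Two compatibility checks are needed. First, $\phi(gxH)=\pi(g)\phi(xH)$, so $L(g)$ corresponds to $\bar L(\pi(g))$. Second, for the right action one must check that $\pi(N_G(H))\subseteq N_{\bar G}(\bar H)$. This holds because $nHn^{-1}=H$ implies $\pi(n)\bar H\pi(n)^{-1}=\bar H$. Then $\phi(xnH)=\phi(xH)\pi(n)$, so $R(n)$ corresponds to $\bar R(\pi(n))$. Extending by linearity, $L(\alpha)=\bar L(\bar\alpha)$ and $R(\beta)=\bar R(\bar\beta)$ as matrices once $\Omega$ and $\bar\Omega$ are ordered compatibly via $\phi$. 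Here $\bar B$ has entries in $\mathbb F_2[\pi(N)]\subseteq\mathbb F_2[N_{\bar G}(\bar H)]$, so $\mathrm{CLP}(\bar A,\bar B;\bar G,\bar H)$ is a legitimate instance of Definition~\ref{def:clp}.

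Trivial core is shown as follows. A normal subgroup of $\bar G$ contained in $\bar H$ has the form $M/K$, with $M$ normal in $G$ and $K\le M\le H$. Since $K$ is the largest normal subgroup of $G$ inside $H$, we get $M=K$.

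Statement~3 is the case $K=H$, which holds exactly when $H\trianglelefteq G$. Then $\bar H$ is trivial, so statement~1 applied to $\bar G$ identifies the code with a lifted product over $\bar G$. Note also that $N=G$ in this case, so $B$ is allowed to have entries in all of $\mathbb F_2[G]$.

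The only delicate step is the normalizer compatibility in statement~4. One must verify that $\pi(N)$ lands inside $N_{\bar G}(\bar H)$, and not require equality, which may fail. One must also fix the orderings of $\Omega$ and $\bar\Omega$ through $\phi$, so that "$=$" is a literal matrix equality and not merely an equivalence up to permutation of qubits and checks.
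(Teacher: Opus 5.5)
Your proof is correct. The paper states this proposition without proof, so there is no argument to compare against step by step. Your treatment of statement~4 transports $L$ and $R$ along the bijection $xH\mapsto\pi(x)\bar H$ and fixes compatible orderings of the two coset sets. That is the same intertwining device the paper uses in part~2 of Lemma~\ref{lem:equiv}. Obtaining statement~3 as the case $K=H$, and statements~1--2 by direct comparison of \eqref{eq:HX}--\eqref{eq:HZ} with Definitions~\ref{def:lp} and~\ref{def:atb}, is the natural way to supply the missing verification.

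One side remark in your last paragraph is false, though your proof does not depend on it. The inclusion $\pi(N_G(H))\subseteq N_{\bar G}(\bar H)$ is always an equality.
\begin{itemize}
\item Since $K\trianglelefteq G$ and $K\le H$, we have $K=gKg^{-1}\le gHg^{-1}$ for every $g\in G$.
\item So the condition $\pi(g)\bar H\pi(g)^{-1}=\bar H$ is equivalent to $gHg^{-1}K=HK=H$, that is, to $gHg^{-1}=H$.
\item Hence $\pi^{-1}(N_{\bar G}(\bar H))=N_G(H)$, and $N_{\bar G}(\bar H)=N_G(H)/K$.
\end{itemize}
The inclusion is all you need to make $\mathrm{CLP}(\bar A,\bar B;\bar G,\bar H)$ a legitimate instance of Definition~\ref{def:clp}. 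The equality says more: every right protograph over $\mathbb F_2[N_{\bar G}(\bar H)]$ lifts to one over $\mathbb F_2[N]$. So the families of CLP codes over $(G,H)$ and over the core-free pair $(\bar G,\bar H)$ coincide exactly.
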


Proposition~\ref{prop:special-cases} restricts the pairs $(G,H)$ of interest as follows. By statement~4 of Proposition~\ref{prop:special-cases} every CLP code is equal to a CLP code with a core free subgroup; so $H$ may be taken core free. By statement 3 of Proposition~\ref{prop:special-cases}, a normal subgroup, $N = G$, yields only lifted products over the quotient group. At the other extreme, a self normalizing subgroup, $N = H$, makes every $R(n)$ the identity by Proposition~\ref{prop:LR}, so that $R(\beta)$ is $0$ or $I_m$ for every $\beta \in \mathbb F_2[N]$ and the right protograph contributes only the parities of its entry weights. The pairs that can produce new codes are therefore those with $H$ core free and $H < N < G$.

Next, we explain our construction using an example.
% Next we illustrate the construction on a small example.

\begin{example}
\label{ex:a4}
Let $G = A_4$ be the alternating group, a subgroup of $S_4$, consisting
of all the even permutations of $\{1,2,3,4\}$, and let
$H = \langle (12)(34) \rangle = \{e, (12)(34)\}$. Notice that $H$ is not a normal subgroup of $A_4$.

We have $|G| = 12$, $|H| = 2$, $m = 6$,
$\operatorname{Core}_G(H) = \{e\}$, and
\[
    N = N_G(H) = \{e,\ (12)(34),\ (13)(24),\ (14)(23)\} \cong
    \mathbb Z_2 \times \mathbb Z_2 .
\]

The quotient of $G$ by $H$ gives total $|G/H|=6$ cosets, 
\[
    \omega_1 = eH,\quad \omega_2 = (234)H,\quad \omega_3 = (243)H,\quad
    \omega_4 = (123)H,\quad \omega_5 = (124)H,\quad \omega_6 = (13)(24)H.
\]
So $\Omega = \{\omega_1, \dots, \omega_6\}$, and each $\omega_i$ contains two group elements.
By Proposition~\ref{prop:LR}, $L$ takes $|G|/|\operatorname{Core}_G(H)| = 12$
distinct values and $R$ takes $|N|/|H| = 2$ distinct values.

Let protograph matrices are
\[
    A = \begin{pmatrix} e & e + (234) \\ e + (124) & e + (134) \end{pmatrix}
    \in M_{2\times 2}(\mathbb F_2[G]),
    \qquad
    B = \begin{pmatrix} e & e \\ e & (13)(24) \end{pmatrix}
    \in M_{2\times 2}(\mathbb F_2[N]),
\]
so that, with $I = I_6$,
\[
    L(A) = \begin{pmatrix} I & I + L\big((234)\big) \\
                           I + L\big((124)\big) & I + L\big((134)\big) \end{pmatrix},
    \qquad
    R(B) = \begin{pmatrix} I & I \\ I & R\big((13)(24)\big) \end{pmatrix}.
\]

The blocks of $L(A)$ and $R(B)$ are built from~\eqref{eq:matconv}. For
instance, $(234)$ acts on the cosets by
\[
    \omega_1 = eH \mapsto (234)H = \omega_2, \quad
    \omega_2 \mapsto (234)^2H = \omega_3, \quad
    \omega_3 \mapsto \omega_1, \qquad
    \omega_4 \mapsto \omega_6, \quad
    \omega_6 \mapsto \omega_5, \quad
    \omega_5 \mapsto \omega_4,
\]
that is, $L((234)) = (1\,2\,3)(4\,6\,5)$ as a permutation of the indices.
By~\eqref{eq:matconv}, column $j$ of its matrix has its single $1$ in
the row of the image of $\omega_j$, and adding $L(e) = I_6$ gives the
block
\[
    L\big(e + (234)\big) = I_6 + L\big((234)\big) =
    \begin{pmatrix}
        1&0&1&0&0&0\\
        1&1&0&0&0&0\\
        0&1&1&0&0&0\\
        0&0&0&1&1&0\\
        0&0&0&0&1&1\\
        0&0&0&1&0&1
    \end{pmatrix}.
\]
In the same way, we can obtain all the blocks of $L(A)$ and $ R(B) $.

Now, by Definition~\ref{def:clp} with
$n_{A0} = n_{A1} = n_{B0} = n_{B1} = 2$,
\[
    H_X = \big(L(A) \otimes I_2 \;\big|\; I_2 \otimes R(B)\big),
    \qquad
    H_Z = \big(I_2 \otimes R(B)^{\mathsf T} \;\big|\; L(A)^{\mathsf T} \otimes I_2\big),
\]
each of size $24 \times 48$, on $n = 6\,(2\cdot 2 + 2\cdot 2) = 48$
qubits. Then $H_X H_Z^{\mathsf T} = 0$ by
Theorem~\ref{thm:orthogonality}. Gaussian elimination gives
$\operatorname{rank} H_X = \operatorname{rank} H_Z = 21$, so $k = 6$, and
we calculate distance using the integer linear program of Section \ref{sec:cert} to be $d_X = d_Z = 4$. The code is a $[[48,6,4]]$.
\end{example}

Different protographs can define the same code up to a relabelling of
qubits and checks. Two CSS codes are \emph{equivalent} if their check
matrices satisfy $H_X' = P_X H_X Q$ and $H_Z' = P_Z H_Z Q$ for
permutation matrices $Q$ on the qubits and $P_X$, $P_Z$ on the checks.
Equivalent codes have the same parameters and the same check weights.
The following lemma extends the equivalences of
\cite[Lemmas~IV.1 and IV.2]{AydinTamoBarg2026}, which are the case of
$1 \times 1$ protographs, to CLP codes.

\begin{lemma}[Equivalent protographs]
\label{lem:equiv}
Let $A$, $B$ be as in Definition~\ref{def:clp}.
\begin{enumerate}
\item Let $g_i, g'_j \in G$ for $i \in [n_{A0}]$, $j \in [n_{A1}]$, and
$h_k, h'_l \in N$ for $k \in [n_{B0}]$, $l \in [n_{B1}]$, and let $A'$,
$B'$ have entries
\[
    \alpha'_{ij} = g_i\,\alpha_{ij}\,g'_j, \qquad
    \beta'_{kl} = h'_l\,\beta_{kl}\,h_k .
\]
Then $\mathrm{CLP}(A,B;G,H)$ and $\mathrm{CLP}(A',B';G,H)$ are equivalent.
\item Let $\varphi$ be an automorphism of $G$, and let $\varphi(A)$,
$\varphi(B)$, $\varphi(H)$ denote its application to every group
element. Then $\mathrm{CLP}(A,B;G,H)$ and
$\mathrm{CLP}(\varphi(A),\varphi(B);G,\varphi(H))$ are equivalent.
\end{enumerate}
\end{lemma}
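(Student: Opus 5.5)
The plan is to realise both equivalences by explicit block-diagonal permutation matrices. Every block of the primed codes will be obtained from the corresponding block of the unprimed code by multiplying on the left and on the right by permutation matrices of the form $L(g)R(n)$. The only tools needed are Proposition~\ref{prop:LR} (statement 2: $L$ is a homomorphism and $R$ an anti-homomorphism) and Lemma~\ref{lem:commute} ($L$ and $R$ commute).

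\emph{Statement 1.} By Proposition~\ref{prop:LR},
\[
L(\alpha'_{ij}) = L(g_i)L(\alpha_{ij})L(g'_j), \qquad R(\beta'_{kl}) = R(h_k)R(\beta_{kl})R(h'_l).
\]
The reversal in the second identity is exactly why the right factors are placed as $h'_l\beta_{kl}h_k$. I will define block-diagonal permutation matrices blockwise, using the indexing of the proof of Theorem~\ref{thm:orthogonality}:
\begin{align*}
(P_X)_{(i,k)} &= L(g_i)R(h_k), & (P_Z)_{(j,l)} &= L(g'_j)^{-1}R(h'_l)^{-1},\\
Q_{(j',k')} &= L(g'_{j'})R(h_{k'})^{-1}, & Q_{(i',l')} &= L(g_{i'})^{-1}R(h'_{l'}).
\end{align*}
Each of these is a product of permutation matrices, hence a permutation matrix. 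Next I will check blockwise from \eqref{eq:HXblocks}:
\[
L(g_i)R(h_k)\,L(\alpha_{ij})\,L(g'_j)R(h_k)^{-1} = L(\alpha'_{ij}),
\]
where Lemma~\ref{lem:commute} lets the $R(h_k)^{\pm1}$ cancel. Similarly,
\[
L(g_i)R(h_k)\,R(\beta_{kl})\,L(g_i)^{-1}R(h'_l) = R(\beta'_{kl}).
\]
This gives $P_XH_XQ = H_X'$.

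For $H_Z$ I will use $\sigma^{\mathsf T}=\sigma^{-1}$ and the \emph{same} $Q$ (this matters, since equivalence requires a common qubit permutation). The blocks of \eqref{eq:HZblocks} become
\begin{align*}
L(g'_j)^{-1}R(h'_l)^{-1}R(\beta_{kl})^{\mathsf T}L(g'_j)R(h_k)^{-1} &= R(h'_l)^{\mathsf T}R(\beta_{kl})^{\mathsf T}R(h_k)^{\mathsf T} = R(\beta'_{kl})^{\mathsf T},\\
L(g'_j)^{-1}R(h'_l)^{-1}L(\alpha_{ij})^{\mathsf T}L(g_i)^{-1}R(h'_l) &= L(g'_j)^{\mathsf T}L(\alpha_{ij})^{\mathsf T}L(g_i)^{\mathsf T} = L(\alpha'_{ij})^{\mathsf T}.
\end{align*}
Hence $P_ZH_ZQ=H_Z'$.

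\emph{Statement 2.} Let $H'=\varphi(H)$ and $\Omega'=G/H'$. The map $\pi\colon xH\mapsto \varphi(x)H'$ is a well-defined bijection $\Omega\to\Omega'$, and $N_G(H')=\varphi(N)$. I will check on cosets that
\[
\pi L(g)\pi^{-1}=L'(\varphi(g)), \qquad \pi R(n)\pi^{-1}=R'(\varphi(n)) \quad (n\in N),
\]
for instance $\pi(gxH)=\varphi(g)\varphi(x)H'$. Both identities extend by linearity to $\mathbb F_2[G]$ and $\mathbb F_2[N]$. Conjugating every $m\times m$ block by $\pi$ is then achieved by taking $P_X,P_Z,Q$ to be block-diagonal with every block equal to $\pi$ (with inverses where appropriate). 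This transforms $H_X,H_Z$ into the check matrices of $\mathrm{CLP}(\varphi(A),\varphi(B);G,H')$.

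The computations are routine; the main obstacle is the bookkeeping. The order reversal of $R$ has to be matched against transposition, and a single $Q$ has to serve both $H_X$ and $H_Z$. Choosing the mixed factors $L(\cdot)R(\cdot)^{\pm1}$ in $Q$ so that the unwanted terms cancel via Lemma~\ref{lem:commute} is the step I would verify most carefully.
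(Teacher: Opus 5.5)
Your proposal is correct and matches the paper's proof essentially line by line. The paper uses exactly your block-diagonal permutations $M_X=\operatorname{diag}_{(i,k)}(L(g_i)R(h_k))$ and $M_Z=\operatorname{diag}_{(j,l)}(L(g'_j)^{-1}R(h'_l)^{-1})$ with the same two-part $Q$, cancelling via Lemma~\ref{lem:commute} and $M^{\mathsf T}=M^{-1}$, and for part 2 it conjugates every block by the permutation matrix of $xH\mapsto\varphi(x)H'$ (your insistence that the qubit factor there carry $P_\sigma^{\mathsf T}=P_\sigma^{-1}$ is slightly more careful than the paper's wording).
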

% By the definition of equivalence, it suffices to show that the matrices $H_X$ and $H_Z$ obtained by these replacements are  $P_X H_X Q$ and $P_Z H_Z Q$ for permutation matrices $P_X$, $P_Z$ on the checks and $Q$ on the qubits.
\begin{proof}
Let $H_X$, $H_Z$ be the check matrices of $\mathrm{CLP}(A,B;G,H)$ and
$H_X'$, $H_Z'$ those of the replaced code. Throughout we use the block
formulas
\[
    (H_X)_{(i,k),(j,k')} = \delta_{kk'} L(\alpha_{ij}), \quad
    (H_X)_{(i,k),(i',l')} = \delta_{ii'} R(\beta_{kl'}), \quad \]
  \[  (H_Z)_{(j,l),(j',k')} = \delta_{jj'} R(\beta_{k'l})^{\mathsf T}, \quad
    (H_Z)_{(j,l),(i',l')} = \delta_{ll'} L(\alpha_{i'j})^{\mathsf T}
\]
from the proof of Theorem~\ref{thm:orthogonality}, and we write
$\operatorname{diag}_{(i,k)}(M_{ik})$ for the block diagonal matrix with
block $M_{ik}$ at the position indexed by $(i,k)$.

1) By Proposition~\ref{prop:LR},
\[
    L(\alpha'_{ij}) = L(g_i)\,L(\alpha_{ij})\,L(g'_j), \qquad
    R(\beta'_{kl}) = R(h_k)\,R(\beta_{kl})\,R(h'_l),
\]
the second because $R$ reverses products. Define the permutation
matrices
\[
    Q = \operatorname{diag}\Big(
        \operatorname{diag}_{(j,k')}\big(L(g'_j)R(h_{k'})^{-1}\big),\
        \operatorname{diag}_{(i',l')}\big(L(g_{i'})^{-1}R(h'_{l'})\big)
        \Big)
\]
on the two qubit groups,
$M_X = \operatorname{diag}_{(i,k)}\big(L(g_i)R(h_k)\big)$ on the $X$
checks, and
$M_Z = \operatorname{diag}_{(j,l)}\big(L(g'_j)^{-1}R(h'_l)^{-1}\big)$ on
the $Z$ checks. Each block is a product of permutation matrices, so
all three are permutation matrices. Multiplying out block by block and
moving every $L$ past every $R$ with Lemma~\ref{lem:commute},
\[
    (M_X H_X Q)_{(i,k),(j,k')}
    = \delta_{kk'}\, L(g_i)R(h_k)\,L(\alpha_{ij})\,L(g'_j)R(h_k)^{-1}
    = \delta_{kk'}\, L(g_i)L(\alpha_{ij})L(g'_j)
    = \delta_{kk'}\, L(\alpha'_{ij}),
\]
\[
    (M_X H_X Q)_{(i,k),(i',l')}
    = \delta_{ii'}\, L(g_i)R(h_k)\,R(\beta_{kl'})\,L(g_i)^{-1}R(h'_{l'})
    = \delta_{ii'}\, R(h_k)R(\beta_{kl'})R(h'_{l'})
    = \delta_{ii'}\, R(\beta'_{kl'}),
\]
so $M_X H_X Q = H_X'$. In the same way, using
$M^{\mathsf T} = M^{-1}$ for permutation matrices,
\[
    (M_Z H_Z Q)_{(j,l),(j',k')}
    = \delta_{jj'}\, R(h'_l)^{-1}R(\beta_{k'l})^{\mathsf T}R(h_{k'})^{-1}
    = \delta_{jj'}\, \big(R(h_{k'})R(\beta_{k'l})R(h'_l)\big)^{\mathsf T}
    = \delta_{jj'}\, R(\beta'_{k'l})^{\mathsf T},
\]
\[
    (M_Z H_Z Q)_{(j,l),(i',l')}
    = \delta_{ll'}\, L(g'_j)^{-1}L(\alpha_{i'j})^{\mathsf T}L(g_{i'})^{-1}
    = \delta_{ll'}\, \big(L(g_{i'})L(\alpha_{i'j})L(g'_j)\big)^{\mathsf T}
    = \delta_{ll'}\, L(\alpha'_{i'j})^{\mathsf T},
\]
so $M_Z H_Z Q = H_Z'$. Hence $H_X' = M_X H_X Q$ and $H_Z' = M_Z H_Z Q$,
with the same qubit permutation $Q$ on both sides.

2) Let $H' = \varphi(H)$, so that $N_G(H') = \varphi(N)$, and define
\[
    \sigma \colon G/H \to G/H', \qquad xH \mapsto \varphi(x)H' .
\]
This map is well defined and bijective, since
$\varphi(xh)H' = \varphi(x)\varphi(h)H' = \varphi(x)H'$ for $h \in H$
and $\varphi$ is a bijection of $G$. For $y \in G$ and $n \in N$,
\[
    \sigma\big(L(y)(xH)\big) = \varphi(y)\varphi(x)H' = L'(\varphi(y))\big(\sigma(xH)\big),
    \qquad
    \sigma\big(R(n)(xH)\big) = \varphi(x)\varphi(n)H' = R'(\varphi(n))\big(\sigma(xH)\big),
\]
where $L'$, $R'$ are the coset actions for $H'$. With $P_\sigma$ the
permutation matrix of $\sigma$ this reads
\[
    L'(\varphi(y)) = P_\sigma\, L(y)\, P_\sigma^{\mathsf T}, \qquad
    R'(\varphi(n)) = P_\sigma\, R(n)\, P_\sigma^{\mathsf T},
\]
and by linearity the same holds for $L(\alpha)$ and $R(\beta)$. Taking
$P_X$, $P_Z$, and $Q$ to be block diagonal with every block equal to
$P_\sigma$ therefore gives $P_X H_X Q = H_X'$ and $P_Z H_Z Q = H_Z'$.
\end{proof}

For an inner automorphism $\varphi(x) = g^{-1}xg$, part 2 replaces $A$
by $g^{-1}Ag$, and part 1 with $g_i = g$, $g'_j = g^{-1}$ undoes this,
so $\mathrm{CLP}(A,B;G,H)$ is also equivalent to
$\mathrm{CLP}(A, g^{-1}Bg; G, g^{-1}Hg)$, which is
\cite[Lemma~IV.2]{AydinTamoBarg2026} for $1 \times 1$ protographs.

\subsection{Dimension}
\label{sec:dimension}

No closed formula for the dimension of a lifted product code is known
in general \cite[Sec.~III-D]{PanteleevKalachev2022LinearDistance}. For
abelian groups of odd order, Panteleev and Kalachev observe that the
group algebra splits into a product of finite fields, that the lifted
product splits accordingly into a direct sum of hypergraph product
codes over those fields, and that the dimension is therefore the sum
of the dimensions of these hypergraph products, each of which has a
closed formula \cite[App.~B]{PanteleevKalachev2022LinearDistance}. 

For
balanced products over a group of odd order, which include lifted
products with regular actions of non-abelian groups, the dimension
follows from the K\"unneth formula~\cite[Lemma~19]{BreuckmannEberhardt2021Balanced} in terms of the
homology of the two factors, and for two-block codes over such groups
it is shown in~\cite{WangLinPryadko2023} to be even. In this subsection
we treat CLP codes for every group of odd order, abelian or not, and
every subgroup $H$. The fields of the abelian case are replaced by
irreducible representations, the subgroup enters through the dimension
of its fixed space in each representation, and the dimension is
expressed through matrices of size $\dim\rho$ and $\dim\rho^H$ rather
than through the homology of the factors, which for a coset action is
not free.

\subsubsection{Preliminaries from representation theory}

Throughout this subsection $|G|$ is odd and $\mathbb F$ is an algebraic
closure of $\mathbb F_2$. A \emph{representation} of $G$ over
$\mathbb F$ is a homomorphism $\rho \colon G \to \operatorname{GL}_{d}(\mathbb F)$,
extended linearly to $\mathbb F[G]$ by
$\rho(\sum_g a_g g) = \sum_g a_g \rho(g)$. We write $d_\rho = d$ for
its dimension. Two representations are equivalent if they differ by a
change of basis, and $\rho$ is \emph{irreducible} if the only
subspaces of $\mathbb F^{d}$ mapped to themselves by every $\rho(g)$
are $0$ and $\mathbb F^{d}$. From now on $\rho$ ranges over the
irreducible representations of $G$ over $\mathbb F$, one from each
equivalence class. For a subgroup $H \le G$ let
\[
    \rho^H = \{\, v \in \mathbb F^{d_\rho} \mid \rho(h)v = v \text{ for all } h \in H \,\}
\]
be the subspace of vectors fixed by $H$. Since $|G|$ is odd, the characteristic of field does not divide $|G|$, by  Maschke's theorem the following facts hold, see~\cite{Serre1977} for more details.

\begin{enumerate}
\item[(F1)] \emph{Decomposition of the coset representation.} There
exist an invertible matrix $S \in \operatorname{GL}_m(\mathbb F)$ and
uniquely determined nonnegative integers $\mu_\rho$ such that, for
every $g \in G$,
\begin{equation}\label{eq:decomp}
    S^{-1} L(g)\, S = \bigoplus_\rho \big(\rho(g) \otimes I_{\mu_\rho}\big),
\end{equation}
where $\bigoplus$ denotes a block diagonal matrix with one block per
$\rho$. In words, the matrices $L(g)$ can be simultaneously brought
to a block diagonal form in which $\rho(g)$ appears $\mu_\rho$ times.
By linearity~\eqref{eq:decomp} holds with $\alpha \in \mathbb F[G]$ in
place of $g$, and $\sum_\rho d_\rho\,\mu_\rho = m$.
\item[(F2)] \emph{Multiplicities.} The number of times $\rho$ appears
is
\begin{equation}\label{eq:mult}
    \mu_\rho = \dim \rho^H .
\end{equation}
This is Frobenius reciprocity applied to the coset representation,
which is induced from the trivial representation of $H$.
\item[(F3)] \emph{Schur's lemma.} If $T \in M_m(\mathbb F)$ commutes
with every $L(g)$, then
\begin{equation}\label{eq:schur}
    S^{-1} T\, S = \bigoplus_\rho \big(I_{d_\rho} \otimes T_\rho\big)
\end{equation}
% for unique matrices $T_\rho \in M_{\mu_\rho}(\mathbb F)$. In words, a
% matrix commuting with the whole representation cannot act inside a
% copy of $\rho$, it can only mix the $\mu_\rho$ copies.

In block form, with two irreducible representations $\rho_1$, $\rho_2$
for illustration, \eqref{eq:decomp} and~\eqref{eq:schur} read
\[
    S^{-1} L(\alpha)\, S =
    \begin{pmatrix}
        \rho_1(\alpha) \otimes I_{\mu_1} & 0 \\
        0 & \rho_2(\alpha) \otimes I_{\mu_2}
    \end{pmatrix},
    \qquad
    S^{-1} R(\beta)\, S =
    \begin{pmatrix}
        I_{d_1} \otimes \tau_1(\beta) & 0 \\
        0 & I_{d_2} \otimes \tau_2(\beta)
    \end{pmatrix}.
\]
% The left factor acts inside each copy of $\rho$ and never mixes
% copies. The right factor mixes the copies and never acts inside one.
% This is the only reason the two factors can be separated.

\item[(F4)] \emph{Rank under field extension.} A matrix with entries
in $\mathbb F_2$ has the same rank over $\mathbb F_2$ and over $\mathbb F$.
\end{enumerate}

We also use the dimension of a hypergraph product. For matrices $A'$
of size $r_A \times c_A$ and $B'$ of size $r_B \times c_B$ over any
field, $\mathrm{HGP}(A',B')$ is Definition~\ref{def:lp} with
$\ell = 1$. It has length $c_A r_B + r_A c_B$ and dimension
\begin{equation}\label{eq:hgpdim}
    k_{\mathrm{HGP}}(A',B') = \kappa(A')\,\kappa(B'^{\mathsf T})
                            + \kappa(A'^{\mathsf T})\,\kappa(B'),
\end{equation}
where $\kappa(M)$ is the dimension of the kernel of $M$
\cite[Eq.~(10)]{PanteleevKalachev2022LinearDistance}.

\subsubsection{The matrices $A_\rho$ and $B_\rho$}

For a matrix $A$ over $\mathbb F_2[G]$ with entries $\alpha_{ij}$ set
\begin{equation}\label{eq:Arho}
    A_\rho = \big(\rho(\alpha_{ij})\big)_{i,j}
    \in M_{n_{A0} d_\rho \times n_{A1} d_\rho}(\mathbb F),
\end{equation}
the matrix obtained from $A$ by replacing every entry by its image
under $\rho$. For $\beta \in \mathbb F_2[N]$ the matrix $R(\beta)$
commutes with every $L(g)$ by Lemma~\ref{lem:commute}; so, (F3) applies
to $T = R(\beta)$. Write $\tau_\rho(\beta)$ for the resulting block
$T_\rho$, so that
\begin{equation}\label{eq:tau}
    S^{-1} R(\beta)\, S = \bigoplus_\rho \big(I_{d_\rho} \otimes \tau_\rho(\beta)\big),
\end{equation}
and for a matrix $B$ over $\mathbb F_2[N]$ with entries $\beta_{kl}$ set
\begin{equation}\label{eq:Brho}
    B_\rho = \big(\tau_\rho(\beta_{kl})\big)_{k,l}
    \in M_{n_{B0}\mu_\rho \times n_{B1}\mu_\rho}(\mathbb F).
\end{equation}

\begin{theorem}[Dimension for odd order]
\label{thm:odd}
Let $|G|$ be odd and let $A$, $B$ be as in Definition~\ref{def:clp}.
Then
\begin{equation}\label{eq:kodd}
    k \;=\; \sum_{\rho} \Big( \kappa(A_\rho)\,\kappa(B_\rho^{\mathsf T})
                            + \kappa(A_\rho^{\mathsf T})\,\kappa(B_\rho) \Big),
\end{equation}
the sum running over the irreducible representations with
$\mu_\rho > 0$. Equivalently, $\mathrm{CLP}(A,B;G,H)$ has the same
dimension as the direct sum of the hypergraph product codes
$\mathrm{HGP}(A_\rho, B_\rho)$ over $\mathbb F$.
\end{theorem}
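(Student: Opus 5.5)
We reduce $k = n - \operatorname{rank} H_X - \operatorname{rank} H_Z$ to a computation over $\mathbb F$ and block-diagonalize both check matrices simultaneously with the matrix $S$ of (F1).

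\emph{Step 1: pass to $\mathbb F$.} By (F4), the ranks of $H_X$ and $H_Z$ are unchanged over $\mathbb F$. It is therefore enough to compute $k$ for the $\mathbb F$-CSS "code" defined by the same matrices. Equivalently, $k$ is the homology dimension of the chain complex
\[
\mathbb F^{m n_{A1}n_{B1}} \xrightarrow{H_Z^{\mathsf T}} \mathbb F^{n} \xrightarrow{H_X} \mathbb F^{m n_{A0}n_{B0}}.
\]

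\emph{Step 2: conjugate by $S$.} Conjugate every $m\times m$ block by $S$, i.e.\ multiply $H_X$ and $H_Z$ on both sides by block-diagonal matrices $\operatorname{diag}(S^{-1},\dots)$ and $\operatorname{diag}(S,\dots)$. This does not change ranks. After conjugation each block $L(\alpha_{ij})$ becomes $\bigoplus_\rho \rho(\alpha_{ij})\otimes I_{\mu_\rho}$ by (F1), and each block $R(\beta_{kl})$ becomes $\bigoplus_\rho I_{d_\rho}\otimes\tau_\rho(\beta_{kl})$ by \eqref{eq:tau}.

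For $H_Z$ we need the conjugates of the transposes. Since $S^{-1}R(\beta)^{\mathsf T}S$ is not directly $(S^{-1}R(\beta)S)^{\mathsf T}$, we avoid transposes instead: note $R(\beta)^{\mathsf T}=R(\beta^*)$ and $L(\alpha)^{\mathsf T}=L(\alpha^*)$, where $^*$ is the involution $g\mapsto g^{-1}$, because permutation matrices satisfy $\sigma^{\mathsf T}=\sigma^{-1}$. Then (F1) and (F3) apply to these blocks as well. What remains is to check that $\rho(\alpha^*)$ and $\tau_\rho(\beta^*)$ are, up to a fixed change of basis per $\rho$, the transposes $\rho(\alpha)^{\mathsf T}$ and $\tau_\rho(\beta)^{\mathsf T}$.

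This is the main obstacle. I would handle it by choosing $S$ adapted to an invariant nondegenerate symmetric bilinear form. The permutation form on $\mathbb F^m$ is $G$-invariant, and by Maschke it restricts nondegenerately to each isotypic component. An alternative avoids unitarity entirely: define $H_Z$-side matrices as $A_\rho^{*}$ built from $\rho(\alpha^*)$. For each $\rho$ with dual $\rho^\vee$, one has $\rho^\vee(\alpha)=\rho(\alpha^*)^{\mathsf T}$, so $\operatorname{rank}$ and $\kappa$ are preserved under transposition. Hence only kernel dimensions of the block matrices $(\rho(\alpha^*_{ji}))$ appear, and these equal $\kappa(A_\rho^{\mathsf T})$ and $\kappa(A_\rho)$. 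The same argument applies to $\tau_\rho$, using that $\tau_\rho$ is an anti-homomorphism compatible with $^*$.

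\emph{Step 3: split and apply the HGP formula.} After conjugation and a permutation of rows and columns grouping coordinates by $\rho$, the complex becomes a direct sum over $\rho$. Inside each $\rho$-summand the coordinates are further a tensor $\mathbb F^{d_\rho}\otimes\mathbb F^{\mu_\rho}$ per block index, and the $X$-check matrix takes the form
\[
\big(A_\rho\otimes I_{n_{B0}\mu_\rho}\ \big|\ I_{n_{A0}d_\rho}\otimes B_\rho\big)
\]
after reordering tensor factors, with the analogous form for $Z$. This is exactly the hypergraph product complex of $A_\rho$ and $B_\rho$ over $\mathbb F$. Its homology dimension is given by \eqref{eq:hgpdim}, which is valid over any field by Künneth. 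Summands with $\mu_\rho=0$ are empty. Summing over $\rho$ yields \eqref{eq:kodd}, and also the stated equivalence with $\bigoplus_\rho \mathrm{HGP}(A_\rho,B_\rho)$.

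The bookkeeping of the tensor reordering is routine. The delicate point is the transpose compatibility in Step 2.
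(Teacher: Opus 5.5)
There is a genuine gap in your handling of $H_Z$. Your outline otherwise matches the paper's: pass to $\mathbb F$ by (F4), conjugate every $m\times m$ block by $S$, regroup coordinates by $\rho$, and apply \eqref{eq:hgpdim}. But you resolve the transpose issue incorrectly, and both of your proposed fixes fail.

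In (a), the permutation form does \emph{not} restrict nondegenerately to isotypic components. Being $G$-invariant, it induces a $G$-map $V_\rho\to V_\rho^{*}$, and $V_\rho^{*}$ is $\rho^\vee$-isotypic. For $|G|$ odd, no nontrivial irreducible is self-dual. So the form vanishes identically on $V_\rho\times V_\rho$ and instead pairs $V_\rho$ with $V_{\rho^\vee}$. Already for $G=\mathbb Z_3$, $H=\{e\}$, the vector $(1,\omega,\omega^2)$ spanning a nontrivial isotypic component is isotropic. Maschke gives complete reducibility, not this nondegeneracy.

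In (b), since $\rho(\alpha^*)=\rho^\vee(\alpha)^{\mathsf T}$, the matrix with block $\rho(\alpha^*_{i'j})$ in position $(j,i')$ in the $\rho$-part of the conjugated $H_Z$ is exactly $A_{\rho^\vee}^{\mathsf T}$. Its kernel dimensions are therefore $\kappa(A_{\rho^\vee}^{\mathsf T})$ and $\kappa(A_{\rho^\vee})$, not $\kappa(A_\rho^{\mathsf T})$ and $\kappa(A_\rho)$, and these differ in general. For example, take $G=\mathbb Z_7=\langle g\rangle$, $H=\{e\}$, $A=(e+g+g^3)$ and $\chi(g)=\zeta$ with $\zeta^3+\zeta+1=0$. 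Then $\chi(\alpha)=0$ but $\chi(\alpha^*)=1+\zeta^6+\zeta^4=\zeta\neq 0$. So after your conjugation the $\rho$-blocks of $H_X$ and $H_Z$ are not the check matrices of $\mathrm{HGP}(A_\rho,B_\rho)$.

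A repair along your lines would need two things: reindexing $\rho\leftrightarrow\rho^\vee$, and a proof that $\tau_\rho(\beta^*)$ is simultaneously similar to $\tau_{\rho^\vee}(\beta)^{\mathsf T}$ for all $\beta$. The latter comes from the pairing of multiplicity spaces induced by the permutation form. The anti-homomorphism property of $\tau_\rho$ alone says nothing about transposes.

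The obstacle is self-inflicted, and the paper avoids it entirely. Only $\operatorname{rank} H_Z=\operatorname{rank} H_Z^{\mathsf T}$ is needed. Moreover, $H_Z^{\mathsf T}=\begin{pmatrix} I_{n_{A1}}\otimes R(B)\\ L(A)\otimes I_{n_{B1}}\end{pmatrix}$ contains the blocks $R(\beta_{kl})$ and $L(\alpha_{ij})$ with no transposes. This is exactly the differential you already wrote down in your Step 1.

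Conjugate all three spaces of that complex by block-diagonal copies of $S$, using the same copy on the qubit space for both differentials. This is a chain isomorphism. By (F1) and \eqref{eq:tau} it splits the complex into a direct sum over $\rho$ of complexes with differentials $\begin{pmatrix} I_{n_{A1}d_\rho}\otimes B_\rho\\ A_\rho\otimes I_{\mu_\rho n_{B1}}\end{pmatrix}$ and $\big(A_\rho\otimes I_{\mu_\rho n_{B0}}\mid I_{n_{A0}d_\rho}\otimes B_\rho\big)$. These are precisely the hypergraph product complexes of $(A_\rho,B_\rho)$. This is the paper's Step~3, and with it your Step 3 goes through unchanged.
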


\begin{proof}
By Definition~\ref{def:css} and (F4), $k = n - \operatorname{rank} H_X
- \operatorname{rank} H_Z$ with both ranks computed over $\mathbb F$.
We change basis by $S$ in every block and read off the ranks.

\emph{Step 1, change of basis.} Let $S_q$ be the block diagonal matrix
with one copy of $S$ for every block of qubits and $S_c$ the block
diagonal matrix with one copy of $S$ for every block of $X$ checks.
By~\eqref{eq:decomp} and~\eqref{eq:tau}, every block $L(\alpha_{ij})$
of $S_c^{-1} H_X S_q$ becomes $\bigoplus_\rho (\rho(\alpha_{ij}) \otimes I_{\mu_\rho})$
and every block $R(\beta_{kl})$ becomes
$\bigoplus_\rho (I_{d_\rho} \otimes \tau_\rho(\beta_{kl}))$.

\emph{Step 2, the $X$ checks.} Group together, for each $\rho$, the
$\rho$ parts of all blocks. This is a permutation of rows and columns,
after which $S_c^{-1} H_X S_q$ is block diagonal with one block per
$\rho$, namely
\begin{equation}\label{eq:HXrho}
    H_X^{\rho} = \Big( A_\rho \otimes I_{\mu_\rho n_{B0}} \;\Big|\;
                       I_{n_{A0} d_\rho} \otimes B_\rho \Big),
\end{equation}
which is the $X$ check matrix of $\mathrm{HGP}(A_\rho, B_\rho)$. Since
changes of basis and permutations preserve rank, we obtain
$\operatorname{rank} H_X = \sum_\rho \operatorname{rank} H_X^\rho$.

\emph{Step 3, the $Z$ checks.} Transposing~\eqref{eq:HZ} gives
\[
    H_Z^{\mathsf T} = \begin{pmatrix} I_{n_{A1}} \otimes R(B) \\ L(A) \otimes I_{n_{B1}} \end{pmatrix},
\]
which contains $L(A)$ and $R(B)$ without transposes. Steps 1 and 2
applied to this matrix give, for each $\rho$, the block
$\big(I_{n_{A1} d_\rho} \otimes B_\rho \,;\, A_\rho \otimes I_{\mu_\rho n_{B1}}\big)$,
whose transpose
\begin{equation}\label{eq:HZrho}
    H_Z^{\rho} = \Big( I_{n_{A1} d_\rho} \otimes B_\rho^{\mathsf T} \;\Big|\;
                       A_\rho^{\mathsf T} \otimes I_{\mu_\rho n_{B1}} \Big)
\end{equation}
is the $Z$ check matrix of $\mathrm{HGP}(A_\rho, B_\rho)$. Hence
$\operatorname{rank} H_Z = \operatorname{rank} H_Z^{\mathsf T}
= \sum_\rho \operatorname{rank} H_Z^\rho$.

\emph{Step 4, the count.} The $\rho$ block has
$d_\rho \mu_\rho (n_{A1}n_{B0} + n_{A0}n_{B1})$ columns, the length of
$\mathrm{HGP}(A_\rho, B_\rho)$, and these sum to $n$ because
$\sum_\rho d_\rho \mu_\rho = m$. Therefore
$k = \sum_\rho k_{\mathrm{HGP}}(A_\rho, B_\rho)$, and~\eqref{eq:kodd}
follows from~\eqref{eq:hgpdim}.
\end{proof}

\begin{remark}[Computing $\tau_\rho$]
\label{rem:tau}
The matrices $\tau_\rho(\beta)$ can be obtained without computing
$S$. Let $W_\rho \subseteq \mathbb F^{1 \times d_\rho}$ be the space of
row vectors $f$ with $f\rho(h) = f$ for all $h \in H$. Then
$\dim W_\rho = \mu_\rho$, every $n \in N$ maps $W_\rho$ to itself by
$f \mapsto f\rho(n)$ because $n$ normalizes $H$, and $\tau_\rho(n)$ is
the matrix of this map in any basis of $W_\rho$, extended linearly to
$\mathbb F_2[N]$. In particular $\tau_\rho(nn') = \tau_\rho(n')\tau_\rho(n)$,
matching the order reversing property of $R$ in
Proposition~\ref{prop:LR}. This is the explicit form of the Frobenius
reciprocity isomorphism behind (F2), and we omit the routine
verification.
\end{remark}

\begin{example}
\label{ex:odd}
Let $G$ be the Heisenberg group of order $27$, realized as the set of
triples $(a,b,c) \in \mathbb F_3^3$ with product
$(a,b,c)(a',b',c') = (a+a',\ b+b',\ c+c'+ab')$, and let $x = (1,0,0)$,
$y = (0,1,0)$, $z = (0,0,1)$. The centre is $\langle z \rangle$ and
$G/\langle z\rangle \cong \mathbb Z_3^2$ through $(a,b,c) \mapsto (a,b)$.
Let $H = \langle x \rangle$. Then $m = 9$, $\operatorname{Core}_G(H)$ is
trivial, and $N = \{(a,0,c)\} = \langle x, z\rangle$, so
$N/H \cong \mathbb Z_3$ is generated by $zH$. Over
$\mathbb F_4 = \{0,1,\omega,\omega^2\}$ the group has nine one
dimensional modules $\chi_{s,t}$, $s,t \in \mathbb F_3$, given by
$\chi_{s,t}(a,b,c) = \omega^{sa+tb}$, and two three dimensional modules
$\rho_1$, $\rho_2$, on which $z$ acts as the scalar $\omega$ and
$\omega^2$ respectively. By~\eqref{eq:mult}, $\mu_{\chi_{s,t}} = 1$
if $s = 0$ and $0$ otherwise, since $\chi_{s,t}(x) = \omega^s$, and
$\mu_{\rho_i} = 1$, since the restriction of $\rho_i$ to $H$ is the
regular module of $\mathbb Z_3$. As a check,
$3 \cdot 1 + 2 \cdot 3 = 9 = m$.

Take $A = (e+z \ \ e+y)$ and $B = (e+z \ \ z^2)^{\mathsf T}$, so
$n = 9\,(2\cdot 2 + 1\cdot 1) = 45$. Direct elimination gives
$\operatorname{rank} H_X = 17$, $\operatorname{rank} H_Z = 18$, and
$k = 10$. Theorem~\ref{thm:odd} gives the same value term by term.
For $\chi_{0,t}$ one has $\chi_{0,t}(z) = 1$, so
$A_{\chi_{0,t}} = (0 \ \ 1+\omega^t)$ and, since $\tau_\rho(z) = 1$
on the one dimensional space $(\rho^*)^H$,
$B_{\chi_{0,t}} = (0 \ \ 1)^{\mathsf T}$. Formula~\eqref{eq:hgpdim} gives
$k_{\chi_{0,0}} = 2\cdot 1 + 1\cdot 0 = 2$ and
$k_{\chi_{0,1}} = k_{\chi_{0,2}} = 1\cdot 1 + 0\cdot 0 = 1$. For
$\rho_i$ one has $A_{\rho_i} = \big((1+\omega^i)I_3 \ \ I_3 + \rho_i(y)\big)$
of rank $3$ and $B_{\rho_i} = (1+\omega^i \ \ \omega^{2i})^{\mathsf T}$ of
rank $1$, so $k_{\rho_i} = 3\cdot 1 + 0\cdot 0 = 3$. The sum is
$2 + 1 + 1 + 3 + 3 = 10$. The five blocks of~\eqref{eq:HXrho}
and~\eqref{eq:HZrho} have ranks $(1,2,2,6,6)$ and $(2,2,2,6,6)$, whose
sums are $17$ and $18$.
\end{example}

\section{Search for codes}
\label{sec:search}

The search produced certified codes at check weights $5$, $6$, $7$,
and $9$, and candidates with distance upper bounds at weights $7$,
$9$, and $10$. Following~\cite{qian2026multi}, codes are compared at
fixed check weight $w$ by $Q = kd^2/n$, with certified distances and
upper bounds kept apart. At $w = 6$ the $[[360,20,20]]$ code attains
$Q = 22.22$, against $15.25$ for the $[[340,16,18]]$ generalized
bicycle code cited as the leading exact code in~\cite{qian2026multi}
and $13.71$ for the $[[224,12,16]]$ code of~\cite{AydinTamoBarg2026}.
At $w = 7$ the $[[360,24,20]]$ code attains $Q = 26.67$, against
$18.00$ for the certified $[[288,16,18]]$ code and $24.00$ for the
bound based $[[384,16,\le 24]]$ code of~\cite{qian2026multi}. At
$w = 9$ the $[[252,16,18]]$ code attains $Q = 20.57$, against $20.25$
for the $[[288,18,18]]$ code of~\cite{qian2026multi}. At $w = 10$ the
candidate $[[360,20,\le 27]]$ reaches $Q \le 40.50$, above the
certified $38.77$ of~\cite{qian2026multi} but below their bound based
endpoint. Our candidates at $w = 9$ reach $Q \le 32.03$ and stay
below the bound based endpoint $36.75$ of~\cite{qian2026multi}. No
code at $w = 8$ cleared the certified figure $30.56$. All codes are
listed in Tables~\ref{tab:cert} and~\ref{tab:cand}, and the
protographs are given in Appendix~\ref{app:data}.

\subsection{Conventions and search space}
\label{sec:space}

Groups are named by their SmallGroups identifier~\cite{SmallGroups,GAP},
elements are words in the polycyclic generating sequence
$f_1, \dots, f_r$ returned by \texttt{SmallGroup}$(\ell,i)$ in
GAP~4.15.1, and $|a|$ is the number of words in $a$. Every code was
rebuilt from the printed data by Definition~\ref{def:clp} and
compared with the stored check matrices.

$[[n,k,d]]$ means that $d$ is
certified by the integer program of Section~\ref{sec:cert}, and
$[[n,k,\le d]]$ that a $10^6$ round search with
QDistRnd~\cite{QDistRnd} on each side found weight $d$ and nothing
lighter. Throughout, "$p\times q$ over $r\times s$" means that $A$ has size
$p\times q$ and $B$ has size $r\times s$.

Two layouts occur. In the $1 \times 1$ over $2 \times 2$ layout
$A = [a]$ and $B$ is $2 \times 2$. In the $2 \times 2$ over $1 \times 1$
layout $A$ is $2 \times 2$ and $B = [b]$. Both give $n = 4m$ with
$m = |G|/|H|$. The search space was restricted as follows.
\begin{enumerate}
\item Pairs $(G,H)$ with $[H\,{:}\,\Core_G(H)] \le 3$. By
Proposition~\ref{prop:special-cases} item 4, such a code equals one over
$G/\Core_G(H)$ with a core free subgroup of order $2$ or $3$, and the
SmallGroups identifiers of the printed data are kept.
\item $[G\,{:}\,N] \in \{2,3,4\}$ and $N \ne H$. Normal subgroups give
lifted products by Proposition~\ref{prop:special-cases} item 3, and
$N = H$ makes every $R(n)$ the identity by Proposition~\ref{prop:LR}.
\item $30 \le m \le 100$, so that $120 \le n \le 400$, over the range of the
tables in~\cite{AydinTamoBarg2026,qian2026multi}. This gave $14,843$
pairs.
\item The support of $a$ contains the identity and an element of
$G \setminus N$, and three entries of $B$ contain the identity. The
first and third conditions are normal forms by Lemma~\ref{lem:equiv}.
The second excludes $a \in \Ftwo[N]$, for which the code is a direct
sum over the orbits of $N$ on $G/H$.
\item For a target weight $w$, $|a| \in [3, w-3]$, the entries of $B$
have weights $1$ to $3$, and the largest row or column sum of entry
weights of $B$ is at most $w- |a|$.
\end{enumerate}
% \textcolor{red}{An earlier run on the pairs behind the codes of~\cite{AydinTamoBarg2026}
% tried the layouts $1\times1$ over $1\times1$, $2\times2$ over
% $1\times1$, $1\times1$ over $2\times2$, and $2\times2$ over
% $2\times2$, with $|a| \le 3$ and entries of weight at most $2$. The
% first recovered the $[[56,6,8]]$ and $[[36,4,6]]$ codes
% of~\cite{AydinTamoBarg2026}, which validated the
% pipeline. The $2\times2$ over $1\times1$ layout produced four
% weight $5$ codes, and the $[[216,8,18]]$ code of Table~\ref{tab:cert}.
% Layouts with a $1\times2$ or $2\times1$ factor never exceeded
% $d_X = 4$ once both sides were screened. All codes above the published
% figures come from the $1\times1$ over $2\times2$ layout. SGS: Please rewrite these statements with clarity.}

 Before the main search, an exploratory run on the pairs $(G,H)$ behind the codes  of~\cite{AydinTamoBarg2026} tried $1\times1$ over $1\times1$, $2\times2$ over $1\times1$, $1\times1$ over $2\times2$, and $2\times2$ over $2\times2$, with entries of $A$ of weight at most $3$ and entries of $B$ of weight at most $2$. The first is the construction of~\cite{AydinTamoBarg2026} itself, and the run recovered their $[[56,6,8]]$ and $[[36,4,6]]$ codes, which validated the pipeline. The second, a $2\times2$ matrix $A$ over $\Ftwo[G]$ and a single element $b \in \Ftwo[N]$, produced the four codes marked $\ast$ in Table~\ref{tab:cert}, three of weight $5$ and the $[[216,8,18]]$ code. The fourth produced nothing. 
 
 Four further shapes were then run across all pairs and closed.
$1\times2$ over $2\times1$ and $1\times1$ over $2\times3$ never
exceeded $d_X = 4$ once both sides were screened. $3\times3$ over
$1\times1$ places the only heavy element $b$ on the $N$ side, and
$R(b)$ was singular for most draws, with $d \le 9$ throughout.
$1\times1$ over $3\times3$ with single element entries reached
$k = 36$ at $n \le 288$ but never $d > 10$. Every code above the
published figures comes from $1\times1$ over $2\times2$, a single
element $a \in \Ftwo[G]$ and a $2\times2$ matrix $B$ over $\Ftwo[N]$,
and the main search was restricted to it.

\subsection{Pipeline and certification}
\label{sec:cert}

For each pair and target weight $W \in \{6, \dots, 10\}$, $5\,000$
random $(a,B)$ were drawn under the conditions above. For each draw,
$H_XH_Z^{\mathsf T} = 0$ was verified numerically, $k$ was computed
from the ranks, draws with $k < 8$ or with a cancellation in a product
$L(a)R(\beta_{kl})$ were discarded, and $d_X$, $d_Z$ were estimated
with QDistRnd at $500$, $20\,000$, and $200\,000$ rounds. A draw
survived only if both estimates stayed at or above the distance needed
for $Q$ to exceed the certified figure at its realized weight. Survivors
were rescreened at $10^6$ rounds on both sides. Of the seven
candidates with a bound of $25$ or more after $2 \times 10^5$ rounds,
six lost between one and five at $10^6$ rounds, so bounds above $25$
should be read with care.

Codes with $d \le 20$ were certified exactly. Let
$\bar L^Z_1, \dots, \bar L^Z_k$ be a basis of logical $Z$ operators
modulo stabilizers. Every nontrivial logical $X$ operator anticommutes
with some $\bar L^Z_j$, so
\begin{equation}\label{eq:sector}
    d_X = \min_{j \in [k]}\ \min\big\{\, \operatorname{wt}(x) \ \big|\
    x \in \Ftwo^n,\ H_Z x = 0,\ \langle x, \bar L^Z_j \rangle = 1 \,\big\},
\end{equation}
and symmetrically for $d_Z$. Each inner minimum is a binary integer
program~\cite{LAR2011}, hard in general~\cite{KapshikarKundu2023},
solved in Gurobi~13.0.3~\cite{Gurobi} with a cutoff at
$d_{\mathrm{ub}} - \tfrac12$. A sector that terminates at the cutoff
has no solution of weight below $d_{\mathrm{ub}}$, and when every
sector on both sides terminates this way, $d \ge d_{\mathrm{ub}}$. One
sector solved without the cutoff returns a weight $d_{\mathrm{ub}}$
logical operator, which pins $d$ exactly. A separate NumPy script
rechecks orthogonality, dimension, check weights, and the witnesses
from the matrix files alone for sanity check. 
% \chk{hardware and wall time for the
% largest certified code}

\subsection{Certified CLP codes and candidates search results}
\label{sec:codes}

Table~\ref{tab:cert} lists every code with certified distance and
Table~\ref{tab:cand} every candidate whose bound held at $10^6$ rounds
on both sides and whose $Q$ at the bound exceeds the certified figure
at its weight.

\begin{table}[H]
\centering
\caption{CLP codes with certified distance. $n = 4m$. For the
$1\times1$ over $2\times2$ layout the last two columns give $|a|$ and the
entry weights of $B$, first row then second row. For the four codes
marked $\ast$, which use the $2\times2$ over $1\times1$ layout, they give
$|b|$ and the entry weights of $A$.}
\label{tab:cert}
\begin{tabular}{clcllcll}
\toprule
$w$ & Code & $Q$ & $G$ & $H$ & $m$ & $|a|$ or $|b|$ & $B$ or $A$ wt \\
\midrule
5 & $[[120,8,8]]^\ast$  & 4.27  & $\SG{360}{99}$  & $C_6 \times C_2$ & 30 & 3 & $1\,1/1\,1$ \\
5 & $[[180,12,8]]^\ast$ & 4.27  & $\SG{135}{3}$   & $C_3$            & 45 & 3 & $1\,1/1\,1$ \\
5 & $[[240,16,8]]^\ast$ & 4.27  & $\SG{360}{99}$  & $C_6$            & 60 & 3 & $1\,1/1\,1$ \\
\midrule
6 & $[[168,10,14]]$ & 11.67 & $\SG{252}{21}$ & $C_6$ & 42 & 3 & $2\,1/1\,2$ \\
6 & $[[240,8,20]]$  & 13.33 & $\SG{360}{99}$ & $C_6$ & 60 & 3 & $1\,2/2\,1$ \\
6 & $[[360,20,20]]$ & 22.22 & $\SG{270}{26}$ & $C_3$ & 90 & 3 & $1\,2/2\,1$ \\
\midrule
7 & $[[112,12,10]]$ & 10.71 & $\SG{336}{188}$ & $D_{12}$ & 28 & 3 & $2\,2/2\,2$ \\
7 & $[[168,14,12]]$ & 12.00 & $\SG{336}{188}$ & $D_8$    & 42 & 3 & $2\,2/2\,2$ \\
7 & $[[168,20,10]]$ & 11.90 & $\SG{336}{188}$ & $D_8$    & 42 & 3 & $2\,2/2\,2$ \\
7 & $[[180,16,12]]$ & 12.80 & $\SG{135}{3}$   & $C_3$    & 45 & 3 & $2\,2/2\,2$ \\
7 & $[[216,8,18]]^\ast$ & 12.00 & $\SG{162}{3}$ & $C_3$  & 54 & 3 & $2\,2/2\,2$ \\
7 & $[[224,24,9]]$  & 8.68  & $\SG{336}{188}$ & $S_3$    & 56 & 3 & $2\,2/1\,2$ \\
7 & $[[240,16,13]]$ & 11.27 & $\SG{360}{99}$  & $C_6$    & 60 & 3 & $1\,2/2\,2$ \\
7 & $[[252,22,10]]$ & 8.73  & $\SG{126}{8}$   & $C_2$    & 63 & 3 & $2\,2/2\,2$ \\
7 & $[[360,24,20]]$ & 26.67 & $\SG{810}{84}$  & $C_9$    & 90 & 3 & $3\,1/1\,3$ \\
\midrule
9 & $[[252,16,18]]$ & 20.57 & $\SG{189}{8}$   & $C_3$ & 63 & 5 & $1\,3/3\,1$ \\
9 & $[[288,16,20]]$ & 22.22 & $\SG{144}{164}$ & $C_2$ & 72 & 4 & $2\,2/2\,3$ \\
\bottomrule
\end{tabular}
\end{table}
\begin{table}[H]
\centering
\caption{Candidate CLP codes with distance upper bounds after a
$10^6$ round two sided rescreen, all in the $1\times1$ over
$2\times2$ layout. $Q$ is evaluated at the bound.}
\label{tab:cand}
\begin{tabular}{clcllcll}
\toprule
$w$ & Code & $Q \le$ & $G$ & $H$ & $m$ & $|a|$ & $B$ wt \\
\midrule
7 & $[[336,16,\le 20]]$ & 19.05 & $\SG{756}{95}$  & $C_3 \times C_3$ & 84 & 3 & $1\,3/3\,1$ \\
7 & $[[360,16,\le 25]]$ & 27.78 & $\SG{810}{106}$ & $C_3 \times C_3$ & 90 & 3 & $1\,3/3\,1$ \\
\midrule
9 & $[[224,12,\le 21]]$ & 23.63 & $\SG{224}{169}$ & $C_4$ & 56 & 5 & $1\,3/3\,1$ \\
9 & $[[288,12,\le 23]]$ & 22.04 & $\SG{216}{138}$ & $C_3$ & 72 & 4 & $3\,2/1\,3$ \\
9 & $[[288,12,\le 24]]$ & 24.00 & $\SG{144}{177}$ & $C_2$ & 72 & 4 & $3\,2/2\,3$ \\
9 & $[[336,18,\le 20]]$ & 21.43 & $\SG{168}{52}$  & $C_2$ & 84 & 5 & $1\,3/3\,1$ \\
9 & $[[336,16,\le 24]]$ & 27.43 & $\SG{168}{40}$  & $C_2$ & 84 & 6 & $2\,1/1\,2$ \\
9 & $[[360,16,\le 26]]$ & 30.04 & $\SG{270}{25}$  & $C_3$ & 90 & 5 & $1\,2/3\,1$ \\
9 & $[[360,12,\le 30]]$ & 30.00 & $\SG{270}{24}$  & $C_3$ & 90 & 5 & $3\,1/1\,3$ \\
9 & $[[360,12,\le 31]]$ & 32.03 & $\SG{270}{25}$  & $C_3$ & 90 & 5 & $1\,3/3\,1$ \\
9 & $[[372,10,\le 32]]$ & 27.53 & $\SG{186}{3}$   & $C_2$ & 93 & 4 & $3\,2/2\,3$ \\
\midrule
10 & $[[360,20,\le 27]]$ & 40.50 & $\SG{270}{25}$ & $C_3$ & 90 & 6 & $3\,1/1\,2$ \\
\bottomrule
\end{tabular}
\end{table}

\section{Conclusion}
\label{sec:conclusion}

We have defined coset lifted product codes, and showed that they are CSS
codes for every subgroup of a group $G$, be it normal or not, and for every pair of protographs. We identified the lifted product, the two-block group algebra codes,
and the coset-based two-block codes among them. We have also provided an explicit formula for the dimension of these codes for the odd ordered group over the
irreducible representations, with the subgroup entering through its
fixed spaces. A search
restricted by these results produced certified codes above the best
published figure of merit for check weights $6,7$ and $9$, and a candidate
at $w=10$ whose distances remain to be
certified.

\section*{Acknowledgment}
Claude (Anthropic), a large language model, was used to write the
Python scripts for distance certification and verification
(\texttt{certify.py}, \texttt{sectors.py}, \texttt{witness.py},
\texttt{verify.py}) and the GAP scripts for the code search, and to assist in drafting
and editing the proof of Theorem~\ref{thm:odd} and
Sections~\ref{sec:search} and~\ref{sec:cert}. All scripts and search
pipelines are available at \texttt{github repo} \cite{CLPrepo}. Distance certification
used the Gurobi Optimizer, version 13.0.3, under a free academic
license~\cite{Gurobi}.

\appendices
\section{Protographs of the codes}
\label{app:data}

Elements follow Section~\ref{sec:space}. For each code we give $G$,
$H$ by generators, the normalizer $N$ where known, and the protographs.

\paragraph{Certified codes, $1\times1$ over $2\times2$}\mbox{}

$[[168,10,14]], w=6$, $G = \SG{252}{21} \cong C_{21} \times (C_3 \rtimes C_4)$,
$H = \langle f_2f_5, f_4\rangle$, $N \cong C_{42} \times C_3$,
$\Core_G(H) \cong C_2$.
\begin{align*}
a= 1 + f_3^{2}f_5^{2} + f_1f_3^{6}f_5, \qquad
B = \begin{pmatrix} 1 + f_3^{2}f_5^{2} & 1\\[2pt] 1 & f_5 + f_3^{3}\end{pmatrix}.
\end{align*}

$[[240,8,20]], w=6$, $G = \SG{360}{99} \cong C_{15} \times ((C_6 \times C_2) \rtimes C_2)$,
$H = \langle f_2, f_3\rangle$, $N \cong C_{30} \times C_6$,
$\Core_G(H) \cong C_3$.
\begin{align*}
a = 1 + f_4^{4}f_5f_6^{2} + f_1f_2f_4^{4}f_6^{2}, \qquad
B = \begin{pmatrix} 1 & 1 + f_4^{2}f_5f_6^{2}\\[2pt] 1 + f_4^{3}f_5f_6^{2} & f_4^{4}f_6^{2}\end{pmatrix}.
\end{align*}

$[[360,20,20]], w=6$, $G = \SG{270}{26} \cong C_{15} \times ((C_3 \times C_3) \rtimes C_2)$,
$H = \langle f_2f_4\rangle$, $N \cong C_{15} \times C_3 \times C_3$.
\begin{align*}
a = 1 + f_3^{3}f_4f_5 + f_1f_2^{2}f_3^{4}, \qquad
B = \begin{pmatrix} 1 & 1 + f_3^{4}f_4^{2}f_5\\[2pt] 1 + f_3f_4^{2}f_5 & f_4f_5\end{pmatrix}.
\end{align*}

$[[360,24,20]], w=7$, $G = \SG{810}{84}$, $H = \langle f_6, f_2f_4\rangle$.
\begin{align*}
a = 1 + f_1f_2^{2}f_4^{2}f_6^{2} + f_1f_2f_3^{3}f_5f_6, \qquad
B = \begin{pmatrix} 1 + f_4^{2}f_5 + f_3^{2}f_4f_5 & 1\\[2pt] 1 & f_4^{2} + f_3^{3}f_4^{2}f_5 + f_3^{4}f_4^{2}f_5\end{pmatrix}.
\end{align*}

$[[252,16,18]], w=9$, $G = \SG{189}{8}$, $H = \langle f_2\rangle$.
\begin{align*}
a = 1 + f_3f_4^{4} + f_2^{2}f_3f_4^{2} + f_1f_2f_4 + f_1^{2}f_4,\qquad
B = \begin{pmatrix} 1 & 1 + f_4 + f_3^{2}f_4^{4}\\[2pt] 1 + f_3^{2} + f_3f_4 & f_4^{3}\end{pmatrix}.
\end{align*}

$[[288,16,20]], w=9$, $G = \SG{144}{164}$, $H = \langle f_1f_5\rangle$.
\begin{align*}
a = 1 + f_3f_4f_5f_6 + f_2f_3f_4^{2}f_5 + f_2f_3f_4^{2}f_5f_6,\qquad
B = \begin{pmatrix} 1 + f_3f_4 & 1 + f_2f_4f_5\\[2pt] 1 + f_2f_3f_4 & f_2 + f_5 + f_2f_3f_4^{2}f_5\end{pmatrix}.
\end{align*}

$[[112,12,10]], w=7$, $G = \SG{336}{188}$, $H = \langle f_6, f_1, f_2\rangle$.
\begin{align*}
a = 1 + f_3f_4^{4} + f_3f_4^{6}f_5,\qquad
B = \begin{pmatrix} 1 + f_4^{2} & 1 + f_4^{4}f_5\\[2pt] 1 + f_4^{3}f_5 & f_4^{3}f_5 + f_4^{6}f_5\end{pmatrix}.
\end{align*}

$[[168,14,12]], w=7$, $G = \SG{336}{188}$, $H = \langle f_1f_3, f_2, f_5\rangle$.
\begin{align*}
a = 1 + f_3f_4^{4}f_6^{2} + f_3f_4^{6}f_6,\qquad
B = \begin{pmatrix} 1 + f_3f_4^{6} & 1 + f_4^{6}\\[2pt] 1 + f_4^{6} & f_4 + f_3f_4^{3}\end{pmatrix}.
\end{align*}

$[[168,20,10]], w=7$, $G = \SG{336}{188}$, $H = \langle f_1f_2, f_2f_3, f_5\rangle$.
\begin{align*}
a = 1 + f_4f_6^{2} + f_4^{5}f_6,\qquad
B = \begin{pmatrix} 1 + f_2f_4 & 1 + f_2f_4^{3}\\[2pt] 1 + f_2f_4^{5} & f_4^{2} + f_4^{4}\end{pmatrix}.
\end{align*}

$[[180,16,12]], w=7$, $G = \SG{135}{3}$, $H = \langle f_1f_2\rangle$.
\begin{align*}
a = 1 + f_1f_2^{2}f_3^{4}f_4^{2} + f_1^{2}f_2f_3f_4,\qquad
B = \begin{pmatrix} 1 + f_3^{2} & 1 + f_3f_4^{2}\\[2pt] 1 + f_3^{3}f_4^{2} & f_3f_4^{2} + f_3^{2}f_4^{2}\end{pmatrix}.
\end{align*}

$[[224,24,9]], w=7$, $G = \SG{336}{188}$, $H = \langle f_6, f_1f_2\rangle$.
\begin{align*}
a = 1 + f_3f_4 + f_3f_4^{5}f_5,\qquad
B = \begin{pmatrix} 1 + f_4f_5 & 1 + f_4^{6}\\[2pt] 1 & f_4 + f_4^{2}f_5\end{pmatrix}.
\end{align*}

$[[240,16,13]], w=7$, $G = \SG{360}{99}$, $H = \langle f_3f_6, f_5\rangle$.
\begin{align*}
a = 1 + f_1f_4^{3}f_6^{2} + f_1f_2f_3^{2}f_4^{4}f_5f_6^{2},\qquad
B = \begin{pmatrix} 1 & 1 + f_2f_4^{3}f_6^{2}\\[2pt] 1 + f_2f_6 & f_2f_4^{2}f_6^{2} + f_4^{4}f_6\end{pmatrix}.
\end{align*}

$[[252,22,10]], w=7$, $G = \SG{126}{8}$, $H = \langle f_1\rangle$.
\begin{align*}
a = 1 + f_3f_4^{6} + f_3^{2}f_4^{4},\qquad
B = \begin{pmatrix} 1 + f_2^{2}f_4^{4} & 1 + f_2\\[2pt] 1 + f_2f_4^{3} & f_2^{2} + f_2f_4^{6}\end{pmatrix}.
\end{align*}

\paragraph{Certified codes, $2\times2$ over $1\times1$}\mbox{}

$[[120,8,8]], w=5$, $G = \SG{360}{99}$, $H = \langle f_2, f_3f_6, f_5\rangle$.
\begin{align*}
A = \begin{pmatrix} 1 & f_1f_3^{2}f_4^{2}f_6^{2}\\[2pt] 1 & f_1f_3^{2}f_6^{2}\end{pmatrix},
\qquad b = 1 + f_6 + f_4^{4}f_6^{2}.
\end{align*}

$[[180,12,8]], w=5$, $G = \SG{135}{3}$, $H = \langle f_2\rangle$.
\begin{align*}
A &= \begin{pmatrix} 1 & f_1^{2}f_2^{2}f_3^{3}\\[2pt] 1 & f_1^{2}f_2^{2}f_3^{4}\end{pmatrix},
\qquad b = 1 + f_4 + f_3^{3}f_4^{2}.
\end{align*}

$[[240,16,8]], w=5$, $G = \SG{360}{99}$, $H = \langle f_2, f_3\rangle$.
\begin{align*}
A &= \begin{pmatrix} 1 & f_1f_2f_4^{3}f_5f_6^{2}\\[2pt] 1 & f_1f_2f_4f_5f_6^{2}\end{pmatrix},
\qquad b = 1 + f_4f_6 + f_4f_6^{2}.
\end{align*}

$[[216,8,18]], w=7$, $G = \SG{162}{3}$, $H = \langle f_3f_5\rangle$.
\begin{align*}
A &= \begin{pmatrix} 1 + f_1f_2f_4 & f_1f_3^{2}f_4^{2}f_5^{2} + f_1f_2f_4^{2}f_5\\[2pt]
                     1 + f_1f_3f_4f_5^{2} & f_2^{2}f_4f_5 + f_1f_2^{2}f_3f_4f_5\end{pmatrix},
\qquad b = 1 + f_2^{2}f_4 + f_2^{2}f_4^{2}f_5.
\end{align*}

\paragraph{Candidates}
The following codes have a $10^6$ round upper bound on both sides and
are given in the same form.

$[[224,12,\le 21]]$, $w=9$, $G = \SG{224}{169}$, $H = \langle f_2, f_6\rangle \cong C_4$.
\[a = 1 + f_1f_4 + f_1f_3f_4^{2} + f_2f_4^{2}f_5f_6 + f_1f_3f_4^{5}f_5f_6,\qquad
B = \begin{pmatrix} 1 & 1 + f_3f_4^{3}f_5 + f_3f_4^{5}\\ 1 + f_4^{4}f_5 + f_3f_4^{5} & f_5\end{pmatrix}.\]

$[[288,12,\le 23]]$, $w=9$, $G = \SG{216}{138}$, $H = \langle f_4f_6\rangle \cong C_3$.
\[a = 1 + f_1f_3f_4^{2}f_5 + f_1f_2f_4^{2}f_5f_6 + f_1f_2f_3f_4^{2}f_5f_6^{2},\qquad
B = \begin{pmatrix} 1 + f_3^{2}f_6 + f_2f_3^{2}f_5f_6^{2} & 1 + f_2f_3^{2}f_6^{2}\\ 1 & f_2f_3f_6 + f_3^{2}f_5 + f_3^{2}f_6^{2}\end{pmatrix}.\]

$[[288,12,\le 24]]$, $w=9$, $G = \SG{144}{177}$, $H = \langle f_3\rangle \cong C_2$.
\[a = 1 + f_5 + f_2f_5f_6 + f_1f_3f_4f_6^{2}, \qquad
B = \begin{pmatrix} 1 + f_4f_5^{2}f_6 + f_5^{2}f_6^{2} & 1 + f_4f_6^{2}\\ 1 + f_2f_5^{2}f_6 & f_2f_4 + f_5^{2} + f_2f_5f_6^{2}\end{pmatrix}.\]

$[[336,16,\le 20]]$, $w=7$, $G = \SG{756}{95}$, $H = \langle f_6, f_4f_5\rangle \cong C_3\times C_3$.
\[a = 1 + f_3^{3}f_5 + f_1f_3^{2}f_5^{2},\qquad
B = \begin{pmatrix} 1 & 1 + f_3f_5^{2} + f_1f_2f_3^{4}f_5^{2}\\ 1 + f_1f_2f_5^{2} + f_3^{5} & f_5\end{pmatrix}.\]

$[[336,16,\le 24]]$, $w=9$, $G = \SG{168}{40}$, $H = \langle f_2\rangle \cong C_2$.
\[a = 1 + f_2f_3f_4f_5 + f_2f_3f_4^{3} + f_1f_4^{3}f_5 + f_1f_3f_4^{5} + f_1f_2f_3f_4^{2}f_5,\qquad
B = \begin{pmatrix} 1 + f_3f_4^{6}f_5 & 1\\ 1 & f_4^{2} + f_3^{2}f_4^{3}f_5\end{pmatrix}.\]

$[[336,18,\le 20]]$, $w=9$, $G = \SG{168}{52}$, $H = \langle f_1f_5\rangle \cong C_2$.
\[a = 1 + f_3f_5 + f_2^{2}f_3f_5 + f_1f_2f_3^{5}f_4 + f_1f_2^{2}f_4,\qquad
B = \begin{pmatrix} 1 & 1 + f_3^{3} + f_3^{3}f_4f_5\\ 1 + f_3^{6}f_4 + f_3^{5}f_4f_5 & f_3^{2}f_5\end{pmatrix}.\]

$[[360,12,\le 30]]$, $w=9$, $G = \SG{270}{24}$, $H = \langle f_2f_5\rangle \cong C_3$.
\[a = 1 + f_1f_2^{2}f_4f_5 + f_1f_2^{2}f_3^{2}f_4^{2} + f_1f_2^{2}f_3^{2}f_4^{2}f_5 + f_1f_2^{2}f_3^{2}f_4^{4},\qquad
B = \begin{pmatrix} 1 + f_3^{2}f_4^{2} + f_3f_4^{3}f_5 & 1\\ 1 & f_4^{3} + f_4f_5^{2} + f_3f_4^{2}f_5\end{pmatrix}.\]

$[[360,12,\le 31]]$, $w=9$, $G = \SG{270}{25}$, $H = \langle f_2f_3f_4\rangle \cong C_3$.
\[a = 1 + f_2f_4f_5^{2} + f_2^{2}f_3f_4f_5^{4} + f_1f_3f_4^{2}f_5^{2} + f_1f_2^{2}f_3f_5^{3},\qquad
B = \begin{pmatrix} 1 & 1 + f_3 + f_3f_5^{2}\\ 1 + f_3f_5 + f_3f_4f_5^{4} & f_4f_5^{4}\end{pmatrix}.\]

$[[360,16,\le 25]]$, $w=7$, $G = \SG{810}{106}$, $H = \langle f_2, f_3f_5\rangle \cong C_3\times C_3$.
\[a = 1 + f_1f_3f_5f_6 + f_3^{2}f_4f_6^{4},\qquad
B = \begin{pmatrix} 1 & 1 + f_6^{4} + f_4f_5^{2}f_6^{3}\\ 1 + f_6 + f_4f_5^{2}f_6^{2} & f_4^{2}f_5^{2}\end{pmatrix}.\]

$[[360,16,\le 26]]$, $w=9$, $G = \SG{270}{25}$, $H = \langle f_2f_3^{2}f_4\rangle \cong C_3$.
\[a = 1 + f_3f_4f_5^{3} + f_2^{2}f_3^{2}f_4f_5 + f_1f_4^{2}f_5^{2} + f_1f_2f_4^{2}f_5^{3},\qquad
B = \begin{pmatrix} 1 & 1 + f_3^{2}f_4^{2}f_5^{3}\\ 1 + f_3^{2}f_4f_5 + f_3^{2}f_4^{2}f_5 & f_4f_5\end{pmatrix}.\]

$[[360,20,\le 27]]$, $w=10$, $G = \SG{270}{25}$, $H = \langle f_2f_3f_4\rangle \cong C_3$.
\[a = 1 + f_3f_5^{4} + f_2^{2}f_3f_4^{2}f_5 + f_1f_3^{2}f_4^{2}f_5^{3} + f_1f_2^{2}f_3^{2}f_4f_5^{3} + f_1f_2^{2}f_3^{2}f_4^{2},\qquad
B = \begin{pmatrix} 1 + f_3^{2}f_4f_5 + f_3f_4^{2}f_5^{4} & 1\\ 1 & f_3f_4f_5^{4} + f_3^{2}f_4^{2}f_5^{4}\end{pmatrix}.\]

$[[372,10,\le 32]]$, $w=9$, $G = \SG{186}{3}$, $H = \langle f_1\rangle \cong C_2$.
\[a = 1 + f_2^{2}f_3^{2} + f_1f_2^{8}f_3 + f_1f_2^{27},\qquad
B = \begin{pmatrix} 1 + f_2^{15} + f_2^{23} & 1 + f_2^{18}\\ 1 + f_2 & 1 + f_2^{18} + f_2^{27}\end{pmatrix}.\]

% $[[384,14,\le 24]]$, $w=9$, $G = \SG{128}{10} \cong (C_8 \rtimes C_2) \rtimes C_8$,
% $H = \langle f_4f_5\rangle \cong C_2$, $m=64$, $1\times 1$ over $3\times 3$ layout.
% $a = 1 + f_3f_5f_6 + f_2f_4f_6 + f_1f_2f_3f_4f_5$,
% $B = \begin{pmatrix}
%  1 & f_3 + f_3f_5f_6 + f_2f_3f_5f_6 & f_6\\
%  f_3f_5 & f_2f_6 & f_2f_5f_7\\
%  f_5 & f_5f_6f_7 & f_7 + f_2f_3 + f_3f_6f_7
% \end{pmatrix}$.

\section{SmallGroup$(l,i)$ Structures}

\begin{table}[H]
\centering
\caption{Structure of the groups appearing in Tables~\ref{tab:cert} and
\ref{tab:cand}, as reported by \texttt{StructureDescription} in GAP.
$D_{2n}$ denotes the dihedral group of order $2n$, $Q_8$ and $Q_{16}$ the
quaternion groups of order 8 and 16. The name is a gloss; the SmallGroups
identifier and its polycyclic presentation are the definition.}
\label{tab:groups}
\begin{tabular}{ll@{\qquad}ll}
\toprule
$G$ & structure & $G$ & structure \\
\midrule
$\SG{126}{8}$   & $S_3 \times (C_7 \rtimes C_3)$ & $\SG{224}{169}$ & $C_{14} \times Q_{16}$ \\
$\SG{128}{10}$  & $(C_8 \rtimes C_2) \rtimes C_8$ & $\SG{252}{21}$ & $C_{21} \times (C_3 \rtimes C_4)$ \\
$\SG{135}{3}$   & $C_5 \times ((C_3\times C_3) \rtimes C_3)$ & $\SG{270}{24}$ & $C_3 \times C_{15} \times S_3$ \\
$\SG{144}{164}$ & $C_3 \times Q_8 \times S_3$ & $\SG{270}{25}$ & $C_3 \times C_3 \times D_{30}$ \\
$\SG{144}{177}$ & $C_2 \times ((C_6\times C_6) \rtimes C_2)$ & $\SG{270}{26}$ & $C_{15} \times ((C_3\times C_3) \rtimes C_2)$ \\
$\SG{162}{3}$   & $C_9 \times D_{18}$ & $\SG{336}{188}$ & $C_7 \times D_8 \times S_3$ \\
$\SG{168}{40}$  & $C_{21} \times D_8$ & $\SG{360}{99}$ & $C_{15} \times ((C_6\times C_2) \rtimes C_2)$ \\
$\SG{168}{52}$  & $C_{14} \times A_4$ & $\SG{756}{95}$ & $C_7 \times (((C_3\times C_3)\rtimes C_3) \rtimes (C_2\times C_2))$ \\
$\SG{186}{3}$   & $C_{31} \times S_3$ & $\SG{810}{84}$ & $C_5 \times (((C_9\times C_3)\rtimes C_3) \rtimes C_2)$ \\
$\SG{189}{8}$   & $C_7 \rtimes ((C_3\times C_3) \rtimes C_3)$ & $\SG{810}{106}$ & $C_3 \times C_3 \times C_3 \times D_{30}$ \\
$\SG{216}{138}$ & $C_3 \times C_6 \times (C_3 \rtimes C_4)$ & & \\
\bottomrule
\end{tabular}
\end{table}

\section{A worked example of the code construction and Theorem~\ref{thm:odd}}

We follow the proof of Theorem~\ref{thm:odd} on one small instance and
record, at each step, which of (F1) to (F4) are exercised along with the relevant equation being used. To aid intuition, we explicitly track the algebraic 
block-diagonalization of the matrices. Throughout, $\mathbb F$ is an 
algebraic closure of $\mathbb F_2$ and $\omega \in \mathbb F$ satisfies 
$\omega^3 = 1$ and $\omega^2 + \omega + 1 = 0$. All matrices below have 
entries in $\mathbb F_4 = \{0, 1, \omega, \omega^2\}$.

\paragraph{Step 0. The data of Definition~\ref{def:clp}.}
Let $G$ be the group of order $27$ with generators $x, y, z$ and relations
\[
    x^3 = y^3 = z^3 = 1, \qquad xz = zx, \qquad yz = zy, \qquad xy = yxz .
\]
It is non abelian and $|G| = 27$ is odd, as the theorem requires. Let
$H = \{1, x, x^2\}$. Since $yxy^{-1} = xz^{-1} \notin H$, the subgroup $H$
is not normal. Every element of $G$ can be written uniquely as
$x^a y^b z^c$ with $a, b, c \in \{0,1,2\}$, and the left cosets are
\[
    \Omega = \{\, y^b z^c H \mid b, c \in \{0,1,2\} \,\}, \qquad m = |\Omega| = 9 .
\]
We order the cosets as $c_1, \dots, c_9$ according to
$(b,c) = (0,0), (1,0), (2,0), (0,1), (1,1), (2,1), (0,2), (1,2), (2,2)$.
The normalizer of $H$ is $N = \{\, x^a z^c \,\}$, of order $9$. We take
$n_{A0} = n_{A1} = n_{B0} = n_{B1} = 2$ and
\[
    A = \begin{pmatrix} \alpha_{11} & \alpha_{12} \\ \alpha_{21} & \alpha_{22} \end{pmatrix} 
      = \begin{pmatrix} 1+y & x \\ 0 & 1+y \end{pmatrix}, \qquad
    B = \begin{pmatrix} \beta_{11} & \beta_{12}   \\ \beta_{21} & \beta_{22} \end{pmatrix}
      = \begin{pmatrix} 1+z & 1+z \\ 1 & 1 \end{pmatrix} .
\]
(Note $B$ here is padded with zeros or restricted to $2 \times 2$ for the block formulas below; we proceed with $n_{B1}=2$ as calculated). 
The code $\mathrm{CLP}(A,B;G,H)$ has length
$n = m\,(n_{A1}n_{B0} + n_{A0}n_{B1}) = 9 \cdot 8 = 72$.

\paragraph{Step 1. The permutation matrices $L$ and $R$ and the giant matrix $H_X$.}
Using $x y^b = y^b x z^b$, which follows from $xy = yxz$, the three
generators act on the cosets by
\[
    L(x)\colon y^b z^c H \mapsto y^b z^{b+c} H, \qquad
    L(y)\colon y^b z^c H \mapsto y^{b+1} z^c H, \qquad
    L(z)\colon y^b z^c H \mapsto y^b z^{c+1} H .
\]
On the right, $R(x) = I_9$ because $x \in H$, and $R(z) = L(z)$ because $z$
is central. In the ordering $c_1, \dots, c_9$ these are $9 \times 9$
permutation matrices, and by Lemma~\ref{lem:commute} every $R(\beta)$
commutes with every $L(\alpha)$. 

The matrix $H_X$ of~\eqref{eq:HX} is defined as
$H_X = \big[\, L(A) \otimes I_2 \;\big|\; I_2 \otimes R(B) \,\big]$.
Since $A$ and $B$ are $2 \times 2$ matrices over the group algebra, evaluating them yields a $36 \times 72$ binary matrix composed of $9 \times 9$ blocks:
\begin{equation}
\label{eq:HX_giant}
    H_X = \left(
    \begin{array}{cccc | cccc}
    L(\alpha_{11}) & 0 & L(\alpha_{12}) & 0 & R(\beta_{11}) & R(\beta_{12}) & 0 & 0 \\
    0 & L(\alpha_{11}) & 0 & L(\alpha_{12}) & R(\beta_{21}) & R(\beta_{22}) & 0 & 0 \\
    L(\alpha_{21}) & 0 & L(\alpha_{22}) & 0 & 0 & 0 & R(\beta_{11}) & R(\beta_{12}) \\
    0 & L(\alpha_{21}) & 0 & L(\alpha_{22}) & 0 & 0 & R(\beta_{21}) & R(\beta_{22})
    \end{array}
    \right),
\end{equation}
whose non-zero $9 \times 9$ blocks evaluate to $I + L(y)$, $L(x)$, $I + R(z)$ and $I$.

\paragraph{Step 2. The irreducible representations and the use of (F2).}
Over $\mathbb F$ the group $G$ has nine $1$-dimensional representations ($d_{\chi} = 1$) 
and two $3$-dimensional representations ($d_{\rho} = 3$), since $9 \cdot 1^2 + 2 \cdot 3^2 = 27$. 
The $1$-dimensional representations factor through $G/\langle z \rangle \cong C_3 \times C_3$
and are determined by $x \mapsto \omega^{i}$, $y \mapsto \omega^{j}$, $i,j\in\{0,1,2\}$. The
$3$-dimensional representations are $\rho$ and $\bar\rho$, given in standard form by\footnote{The reader must note that the mapping will only result in actions over powers of $y$ and $z$ on $H$. Any $x$ term is subsumed in $H$ based on the group relations. }
\[
    \rho(x) = D = \operatorname{diag}(1, \omega, \omega^2), \quad
    \rho(y) = C = \begin{pmatrix} 0&0&1\\ 1&0&0\\ 0&1&0 \end{pmatrix}, \quad
    \rho(z) = \omega I_3,
\]
and $\bar\rho(x) = \bar D = \operatorname{diag}(1, \omega^2, \omega)$,
$\bar\rho(y) = C$, $\bar\rho(z) = \omega^2 I_3$. One checks
$DC = \omega\, CD$ and $\bar D C = \omega^2\, C \bar D$, so the relation
$xy = yxz$ holds.

By (F2), equation~\eqref{eq:mult}, the multiplicity $\mu_\rho$ in $V=\mathbb{F}[G/H]$ is the dimension 
of the fixed space $\rho^H$. Since $H = \langle x \rangle$, we check the fixed points of $x$.
For a $1$-dimensional representation this is $1$ when $x \mapsto 1$ and $0$ otherwise, 
so exactly three $1$-dimensional representations occur:
\[
    \chi_j\colon\ x \mapsto 1,\ z \mapsto 1,\ y \mapsto \omega^{j}, \qquad j = 0, 1, 2,
\]
each with $\mu_{\chi_j} = 1$. For $\rho$, the fixed space of $\rho(x) = D$ is exactly 
the span of the first standard basis vector $e_1$, so $\mu_\rho = 1$, and likewise
$\mu_{\bar\rho} = 1$. The count
\[
    \sum_\rho d_\rho\,\mu_\rho = (1 \cdot 1) + (1 \cdot 1) + (1 \cdot 1) + (3 \cdot 1) + (3 \cdot 1) = 9 = m
\]
confirms that no representation has been missed. The sum in~\eqref{eq:kodd}
therefore runs over the five representations $\chi_0, \chi_1, \chi_2, \rho, \bar\rho$.

\paragraph{Step 3. Computing $\tau_\rho$ via Remark~\ref{rem:tau}.}
To find $\tau_\rho(\beta)$, we find the space $W_\rho$ of row vectors $f \in \mathbb F^{1 \times d_\rho}$ 
such that $f \rho(x) = f$. 
For $\chi_j$, $W_{\chi_j} = \mathbb F$ is $1$-dimensional, and both $x$ and $z$ act trivially, so
$\tau_{\chi_j}(x) = \tau_{\chi_j}(z) = 1$. 
For $\rho$, the condition $f D = f$ forces $f = (c, 0, 0)$ for $c \in \mathbb F$. Thus $W_\rho$ 
is $1$-dimensional with basis $e_1^{\mathsf T} = (1, 0, 0)$. The action of $z \in N$ maps this to 
$f \rho(z) = (1,0,0)(\omega I_3) = (\omega, 0, 0) = \omega f$. The action of $x$ is trivial. Thus,
\[
    \tau_\rho(x) = 1, \qquad \tau_\rho(z) = \omega, \qquad
    \tau_{\bar\rho}(x) = 1, \qquad \tau_{\bar\rho}(z) = \omega^2 .
\]
Because $\mu_\rho = 1$, all $\tau_\rho(\beta)$ are $1 \times 1$ scalar matrices.

\paragraph{Step 4. The change of basis $S$ and the use of (F1).}
By (F1), there exists an invertible $9 \times 9$ matrix $S$ such that $S^{-1} L(g) S$ is strictly 
block diagonal. This matrix maps the standard basis of $\mathbb F[\Omega]$ into the explicit blocks of the representations:
\[
    S = \begin{pmatrix}
        1 & 1        & 1        & 1        & 0        & 0        & 1        & 0        & 0 \\
        1 & \omega^2 & \omega   & 0        & 1        & 0        & 0        & 1        & 0 \\
        1 & \omega   & \omega^2 & 0        & 0        & 1        & 0        & 0        & 1 \\
        1 & 1        & 1        & \omega^2 & 0        & 0        & \omega   & 0        & 0 \\
        1 & \omega^2 & \omega   & 0        & \omega^2 & 0        & 0        & \omega   & 0 \\
        1 & \omega   & \omega^2 & 0        & 0        & \omega^2 & 0        & 0        & \omega \\
        1 & 1        & 1        & \omega   & 0        & 0        & \omega^2 & 0        & 0 \\
        1 & \omega^2 & \omega   & 0        & \omega   & 0        & 0        & \omega^2 & 0 \\
        1 & \omega   & \omega^2 & 0        & 0        & \omega   & 0        & 0        & \omega^2
    \end{pmatrix}.
\]
In this basis, every $L(\alpha_{ij})$ becomes $\Lambda_{ij} = S^{-1} L(\alpha_{ij}) S$, 
which decomposes as $\bigoplus_\rho (\rho(\alpha_{ij}) \otimes I_{\mu_\rho})$:
\[
    \Lambda_{ij} = \operatorname{diag}\big( \chi_0(\alpha_{ij}),\; \chi_1(\alpha_{ij}),\; \chi_2(\alpha_{ij}),\; \rho(\alpha_{ij}),\; \bar\rho(\alpha_{ij}) \big).
\]
For instance, $S^{-1} L(x) S = \operatorname{diag}(1, 1, 1, D, \bar D)$.

\paragraph{Step 5. The use of (F3) for the right action.}
By Lemma~\ref{lem:commute} each $R(\beta_{kl})$ commutes with $L(g)$, so by (F3) Schur's lemma
forces $\Upsilon_{kl} = S^{-1} R(\beta_{kl}) S$ to decompose as $\bigoplus_\rho (I_{d_\rho} \otimes \tau_\rho(\beta_{kl}))$.
Since $d_\rho = 3$ for the larger blocks, $\tau_\rho$ acts as a scalar multiplier on the identity:
\[
    \Upsilon_{kl} = \operatorname{diag}\big( \tau_{\chi_0}(\beta_{kl}),\; \tau_{\chi_1}(\beta_{kl}),\; \tau_{\chi_2}(\beta_{kl}),\; \tau_\rho(\beta_{kl}) I_3,\; \tau_{\bar\rho}(\beta_{kl}) I_3 \big).
\]
For example, $S^{-1} R(z) S = \operatorname{diag}(1, 1, 1, \omega I_3, \omega^2 I_3)$.

\paragraph{Step 6. The matrices $A_\rho$ and $B_\rho$ of~\eqref{eq:Arho} and~\eqref{eq:Brho}.}
Replacing each entry of $A$ by its $d_\rho \times d_\rho$ representation matrix, and each entry
of $B$ by its $\mu_\rho \times \mu_\rho$ $\tau$-matrix gives, for the $1$-dimensional representations,
\[
    A_{\chi_0} = \begin{pmatrix} 0 & 1 \\ 0 & 0 \end{pmatrix}, \quad
    A_{\chi_1} = \begin{pmatrix} \omega^2 & 1 \\ 0 & \omega^2 \end{pmatrix}, \quad
    A_{\chi_2} = \begin{pmatrix} \omega & 1 \\ 0 & \omega \end{pmatrix}, \quad
    B_{\chi_j} = \begin{pmatrix} 0 & 0 \\ 1 & 1 \end{pmatrix},
\]
using $\chi_j(1+y) = 1 + \omega^{j}$, $1 + \omega = \omega^2$,
$1 + \omega^2 = \omega$, and $\tau_{\chi_j}(1+z) = 0$. For the $3$-dimensional representations, 
$A_\rho \in M_{6 \times 6}(\mathbb F)$ and $B_\rho \in M_{2 \times 2}(\mathbb F)$,
\[
    A_\rho = \begin{pmatrix} I_3 + C & D \\ 0 & I_3 + C \end{pmatrix}
           = \begin{pmatrix}
             1&0&1& 1&0&0 \\ 1&1&0& 0&\omega&0 \\ 0&1&1& 0&0&\omega^2 \\
             0&0&0& 1&0&1 \\ 0&0&0& 1&1&0 \\ 0&0&0& 0&1&1
             \end{pmatrix}, \qquad
    B_\rho = \begin{pmatrix} \omega^2 & \omega^2 \\ 1 & 1 \end{pmatrix},
\]
and $A_{\bar\rho}$, $B_{\bar\rho}$ are obtained by replacing $D$ with
$\bar D$ and $\omega^2$ with $\omega$.

\paragraph{Step 7. Step~2 of the proof, the $X$ checks and Matrix Permutation.}
Applying the block-diagonal block matrices $S_q$ and $S_c$ directly to $H_X$ substitutes $\Lambda_{ij}$ and $\Upsilon_{kl}$ into~\eqref{eq:HX_giant}. This produces the unpermuted matrix $H_X^{\text{unpermuted}}$:
\[
    H_X^{\text{unpermuted}} = 
    \left(
    \begin{array}{cccc | cccc}
    \Lambda_{11} & \mathbf{0} & \Lambda_{12} & \mathbf{0} & \Upsilon_{11} & \Upsilon_{12} & \mathbf{0} & \mathbf{0} \\
    \mathbf{0} & \Lambda_{11} & \mathbf{0} & \Lambda_{12} & \Upsilon_{21} & \Upsilon_{22} & \mathbf{0} & \mathbf{0} \\
    \Lambda_{21} & \mathbf{0} & \Lambda_{22} & \mathbf{0} & \mathbf{0} & \mathbf{0} & \Upsilon_{11} & \Upsilon_{12} \\
    \mathbf{0} & \Lambda_{21} & \mathbf{0} & \Lambda_{22} & \mathbf{0} & \mathbf{0} & \Upsilon_{21} & \Upsilon_{22}
    \end{array}
    \right).
\]
Because the representation data are interleaved within the $\Lambda$ and $\Upsilon$ grids, a global permutation $\mathcal{P}$ collects the indices corresponding to each representation, shattering $H_X$ into strictly independent blocks $H_X^\rho = \big[\, A_\rho \otimes I_{\mu_\rho n_{B0}} \;\big|\; I_{n_{A0} d_\rho} \otimes B_\rho \,\big]$. 

Gathering the indices corresponding to the top-left entry of each $9 \times 9$ block yields the $4 \times 8$ parity check matrix for $\chi_0$:
\[
    H_X^{\chi_0} = \left( \begin{array}{cccc | cccc} 
    \chi_0(\alpha_{11}) & 0 & \chi_0(\alpha_{12}) & 0 & \tau_{\chi_0}(\beta_{11}) & \tau_{\chi_0}(\beta_{12}) & 0 & 0 \\ 
    0 & \chi_0(\alpha_{11}) & 0 & \chi_0(\alpha_{12}) & \tau_{\chi_0}(\beta_{21}) & \tau_{\chi_0}(\beta_{22}) & 0 & 0 \\ 
    \chi_0(\alpha_{21}) & 0 & \chi_0(\alpha_{22}) & 0 & 0 & 0 & \tau_{\chi_0}(\beta_{11}) & \tau_{\chi_0}(\beta_{12}) \\ 
    0 & \chi_0(\alpha_{21}) & 0 & \chi_0(\alpha_{22}) & 0 & 0 & \tau_{\chi_0}(\beta_{21}) & \tau_{\chi_0}(\beta_{22}) \end{array} \right).
\]
Numerically, this evaluates to:
\[
    H_X^{\chi_0} = \left(\begin{array}{cccc|cccc}
        0&0&1&0& 0&0&0&0 \\ 0&0&0&1& 1&1&0&0 \\ 0&0&0&0& 0&0&0&0 \\ 0&0&0&0& 0&0&1&1
    \end{array}\right), \qquad \operatorname{rank} H_X^{\chi_0} = 3,
\]
and for $\rho$, $H_X^{\rho} \in M_{12 \times 24}(\mathbb F)$. The final block-diagonal matrix is:
\[
\mathcal{P}_{\text{row}}^{-1} \big(S_c^{-1} H_X S_q\big) \mathcal{P}_{\text{col}} = 
\left(
\begin{array}{c|c|c|c|c}
H_X^{\chi_0} & \mathbf{0}_{4 \times 8} & \mathbf{0}_{4 \times 8} & \mathbf{0}_{4 \times 24} & \mathbf{0}_{4 \times 24} \\
\hline
\mathbf{0}_{4 \times 8} & H_X^{\chi_1} & \mathbf{0}_{4 \times 8} & \mathbf{0}_{4 \times 24} & \mathbf{0}_{4 \times 24} \\
\hline
\mathbf{0}_{4 \times 8} & \mathbf{0}_{4 \times 8} & H_X^{\chi_2} & \mathbf{0}_{4 \times 24} & \mathbf{0}_{4 \times 24} \\
\hline
\mathbf{0}_{12 \times 8} & \mathbf{0}_{12 \times 8} & \mathbf{0}_{12 \times 8} & H_X^{\rho} & \mathbf{0}_{12 \times 24} \\
\hline
\mathbf{0}_{12 \times 8} & \mathbf{0}_{12 \times 8} & \mathbf{0}_{12 \times 8} & \mathbf{0}_{12 \times 24} & H_X^{\bar{\rho}}
\end{array}
\right).
\]
The ranks of the five blocks are $3, 4, 4, 10, 10$, so
\[
    \operatorname{rank} H_X = 3 + 4 + 4 + 10 + 10 = 31 .
\]

\paragraph{Step 8. Step~3 of the proof, the $Z$ checks.}
Transposing~\eqref{eq:HZ} gives
$H_Z^{\mathsf T} = \begin{pmatrix} I_2 \otimes R(B) \\ L(A) \otimes I_2 \end{pmatrix}$.
The same substitutions and permutations apply block by block giving,
as in~\eqref{eq:HZrho},
\[
    H_Z^{\chi_j} = \big[\, I_2 \otimes B_{\chi_j}^{\mathsf T} \;\big|\; A_{\chi_j}^{\mathsf T} \otimes I_2 \,\big], \qquad
    H_Z^{\rho} = \big[\, I_6 \otimes B_\rho^{\mathsf T} \;\big|\; A_\rho^{\mathsf T} \otimes I_2 \,\big].
\]
One checks $H_X^{\chi_0} (H_Z^{\chi_0})^{\mathsf T} = 0$. The block ranks
are again $3, 4, 4, 10, 10$ and $\operatorname{rank} H_Z = 31$.

\paragraph{Step 9. Step~4 of the proof, the count, and the use of (F4).}
The $\rho$ block has $d_\rho\,\mu_\rho\,(n_{A1}n_{B0} + n_{A0}n_{B1}) = 3 \cdot 1 \cdot 8 = 24$
qubits and each $\chi_j$ block has $8$, which sum to $72 = n$. Hence
\[
    k = (8 - 3 - 3) + (8 - 4 - 4) + (8 - 4 - 4) + (24 - 10 - 10) + (24 - 10 - 10) = 10 .
\]
The ranks of $H_X$ and $H_Z$ were computed over $\mathbb F_4$. By (F4)
they equal the ranks over $\mathbb F_2$, so $k = 10$. A direct Gaussian
elimination of the original $36 \times 72$ matrices over $\mathbb F_2$ gives
$\operatorname{rank} H_X = \operatorname{rank} H_Z = 31$ and $k = 72 - 62 = 10$.

\paragraph{Step 10. The kernel form~\eqref{eq:kodd} through~\eqref{eq:hgpdim}.}
Each bracket in Step~9 equals $k_{\mathrm{HGP}}(A_\rho, B_\rho)$, which by~\eqref{eq:hgpdim}
is $\kappa(A_\rho)\kappa(B_\rho^{\mathsf T}) + \kappa(A_\rho^{\mathsf T})\kappa(B_\rho)$.
The kernel dimensions evaluate directly from the representation blocks:
\begin{itemize}
\item $\chi_0$. $A_{\chi_0}$ and $B_{\chi_0}$ both have rank $1$, so
      $\kappa(A_{\chi_0}) = \kappa(A_{\chi_0}^{\mathsf T}) = \kappa(B_{\chi_0}) = \kappa(B_{\chi_0}^{\mathsf T}) = 1$
      and the contribution is $1 \cdot 1 + 1 \cdot 1 = 2$.
\item $\chi_1, \chi_2$. $A_{\chi_j}$ is upper triangular with nonzero
      diagonal, hence invertible, so $\kappa(A_{\chi_j}) = \kappa(A_{\chi_j}^{\mathsf T}) = 0$
      and the contribution is $0$.
\item $\rho, \bar\rho$. A vector $(u, v)$ with $u, v \in \mathbb F^3$ lies
      in $\ker A_\rho$ when $(I + C)v = 0$ and $(I + C)u = Dv$. The first
      condition gives $v = c\,(1,1,1)^{\mathsf T}$. Then
      $Dv = c\,(1, \omega, \omega^2)^{\mathsf T}$ has coordinate sum
      $c(1 + \omega + \omega^2) = 0$, which is exactly the image of $I + C$,
      so $u$ exists and is determined up to $\ker(I + C) = \mathbb F (1,1,1)^{\mathsf T}$.
      Hence $\kappa(A_\rho) = 2$, and $\kappa(A_\rho^{\mathsf T}) = 2$ since
      $A_\rho$ is square. The rows of $B_\rho$ are proportional, so
      $\kappa(B_\rho) = \kappa(B_\rho^{\mathsf T}) = 1$. The contribution is
      $2 \cdot 1 + 2 \cdot 1 = 4$, and the same for $\bar\rho$.
\end{itemize}
Summing,
\[
    k = 2 + 0 + 0 + 4 + 4 = 10 ,
\]
in complete agreement with Step~9. Thus $\mathrm{CLP}(A,B;G,H)$ is a $[[72, 10]]$
code is exactly equal in dimension to the direct sum of the five hypergraph product codes.

\end{document}